\documentclass[12pt,a4paper]{article}
\usepackage[DIV=13]{typearea}
\usepackage[onehalfspacing]{setspace}

\def\spacingset#1{\renewcommand{\baselinestretch}%
	{#1}\small\normalsize} \spacingset{1}
\usepackage[english]{babel}
\usepackage[ansinew]{inputenc}
\usepackage{amsmath,amsfonts, amssymb, amsthm,bm,array,dsfont,graphicx,natbib}
\usepackage[colorlinks,citecolor=black,linkcolor=black,urlcolor=black]{hyperref}
\usepackage[hang,tight]{subfigure}
\usepackage{authblk}
\usepackage{framed}
\usepackage{xcolor}
\usepackage{bbm}
\usepackage{comment}

\colorlet{shadecolor}{gray!25}

\newcommand{\monthword}[1]{\ifcase#1\or Jan\or Feb\or M\"ar\or Apr\or Mai\or Jun\or Jul\or Aug\or Sep\or Okt\or Nov\or Dez\fi}
\newcommand{\leadingzero}[1]{\ifnum #1<10 0\the#1\else\the#1\fi}             
 \theoremstyle{plain}
 \newtheorem{assumption}{Assumption}
 \newtheorem{corollary}{Corollary}
\newtheorem{Theorem}{Theorem}
\newcommand{\todayI}{\the\year"~\leadingzero{\month}"~\leadingzero{\day}}    	
\newcommand{\todayII}{\the\year\leadingzero{\month}\leadingzero{\day}}       	
\newcommand{\todayIII}{\leadingzero{\day}/\leadingzero{\month}/\the\year}    	
\newcommand{\todayIV}{\leadingzero{\day}.\leadingzero{\month}.\the\year}     	
\newcommand{\todayV}{\the\day.\the\month.\the\year}                          	
\newcommand{\todayVI}{\the\day.~\monthword{\the\month}. \the\year}           	
\newcommand{\todayVII}{\leadingzero{\day}.~\monthword{\the\month}. \the\year}	
\newcommand{\todayVIII}{\monthword{\the\month}. \the\year}										



\newcommand{\mA}{\bm A}
\newcommand{\va}{\bm a}
\newcommand{\mB}{\bm B}

\newcommand{\mD}{\bm D}
\newcommand{\vd}{\bm d}

\newcommand{\ve}{\bm e}

\newcommand{\vf}{\bm f}

\newcommand{\mH}{\bm H}

\newcommand{\mI}{\bm I}

\newcommand{\mM}{\bm M}

\newcommand{\mP}{\bm P}

\newcommand{\mQ}{\bm Q}

\newcommand{\mR}{\bm R}

\newcommand{\mS}{\bm S}

\newcommand{\mT}{\bm T}

\newcommand{\vu}{\bm u}

\newcommand{\mX}{\bm X}

\newcommand{\vy}{\bm y}
\newcommand{\mZ}{\bm Z}
\newcommand{\vz}{\bm z}


\newcommand{\valpha}{\bm \alpha}

\newcommand{\vzeta}{\bm \zeta}

\newcommand{\vtheta}{\bm \theta}

\newcommand{\vxi}{\bm \xi}

\newcommand{\vchi}{\bm \chi}

\newcommand{\vvarphi}{\bm \varphi}


\newcommand{\mLambda}{\bm \varLambda}

\newcommand{\mSigma}{\bm \varSigma}

\newcommand{\mPsi}{\bm \varPsi}


%

%


\makeatletter
\newcommand{\rd}{\@ifnextchar^{\DIfF}{\DIfF^{}}}
\def\DIfF^#1{%
   \mathop{\mathrm{\mathstrut d}}%
   \nolimits^{#1}\gobblespace}
\def\gobblespace{\futurelet\diffarg\opspace}
\def\opspace{%
   \let\DiffSpace\!%
   \ifx\diffarg(%
   \let\DiffSpace\relax
   \else
   \ifx\diffarg[%
   \let\DiffSpace\relax
   \else
   \ifx\diffarg\{%
   \let\DiffSpace\relax
   \fi\fi\fi\DiffSpace}


\newcommand{\diag}{\operatorname{diag}}


%

%

%

%




\newcommand{\norm}[1]{\left\lVert #1 \right\rVert}

\renewcommand{\lim}[1]{\underset{#1}{\operatorname{lim}} \;}
\newcommand{\plim}[1]{\underset{#1}{\operatorname{plim}} \;}

\renewcommand{\min}[1]{\underset{#1}{min} \;}


\newcommand{\bvec}{\begin{pmatrix}}
	\newcommand{\evec}{\end{pmatrix}}
\newcommand{\bmat}{\begin{bmatrix}}
	\newcommand{\emat}{\end{bmatrix}}
\allowdisplaybreaks

\title{Macroeconomic Forecasting with Fractional Factor Models}
\author[1,2]{Tobias Hartl\footnote{Corresponding author. E-Mail: tobias1.hartl@ur.de\\
The author thanks Federico Carlini, Manfred Deistler, Christoph Rust, Rolf Tschernig, Enzo Weber, Roland Weigand,  and the participants at the Long Memory Conference 2018 in Aalborg, at the Workshop on Statistics and Econometrics 2018 in Passau, at the Econometric Society Europe Meeting 2018 in Cologne, at the Annual Meeting of the German Statistical Society 2018 in Linz, and at the International Conference on Computational and Financial Econometrics 2018 in Pisa for many valuable comments. Support through the projects TS283/1-1 and WE4847/4-1 financed by the German Research Foundation (DFG) is gratefully acknowledged.}}

\affil[1]{University of Regensburg, 93053 Regensburg, Germany}
\affil[2]{Institute for Employment Research (IAB), 90478 Nuremberg, Germany}
\date{May 2020}

\begin{document}
	\spacingset{1.45}
\maketitle

\thispagestyle{empty}
\setcounter{page}{0}
\paragraph{\bf Abstract.}
We combine high-dimensional factor models with fractional integration methods and derive models where
nonstationary, potentially cointegrated data of different persistence is modelled as a function of common fractionally integrated factors. 
A two-stage estimator, that combines principal components and the Kalman filter, is proposed. The forecast performance is studied for a high-dimensional US macroeconomic data set, where we find that benefits from the fractional factor models can be substantial, as they outperform univariate autoregressions, principal components, and the factor-augmented error-correction model.

\paragraph{\bf Keywords.}
Fractional integration, state space model, principal components, long memory, Kalman filter

\paragraph{\bf JEL-Classification.}

C32, C38, C51, C53

\newpage

\section{Introduction}
At least since the seminal work of \cite{ForHalLi2000} and \cite{StoWat2002}, factor models have become a popular tool for forecasting macroeconomic dynamics, as they handle covariation in the cross-section efficiently by condensing it to a typically small number of common latent factors. Regardless their applicability to large data sets, the major drawback of standard factor models is an inefficient use of longitudinal information: in contrast to e.g.\ VARMA models, the vast majority of factor models requires stationarity. Consequently, key features of macroeconomic data, such as nonstationary trends and cointegration, are not captured adequately by standard factor models but rather differenced away. Over-differencing of latent processes poses an additional risk, since model selection criteria and model specification tests for the number of factors are likely to miss these components, as eigenvalues corresponding to over-differenced series converge to zero. 

\noindent
A more flexible setup is suggested by a young strand of the factor model literature that adds unit roots to the model \citep[see, e.g.,][]{PenPon2006, Eic2009, ChaMilPa2009, BanMarMa2014, BanMarMa2016, BarLipLu2016}. But these models come with the drawback of requiring a priori assumptions about the degree of persistence, and typically the series under study are assumed to be $I(1)$. This makes an endogenous treatment of the (unknown) long-run dynamic characteristics of observable time series impossible. Statistical inference about the degree of persistence of an observable variable is then limited to prior unit root testing, ignoring the non-standard behavior of many economic series that are fractionally integrated. Misspecifying the integration orders of the observable variables may bias the factor estimates, can yield wrong inference about the number of common factors, and is likely to deteriorate the forecast performance. 

\noindent
To address these problems, semiparametric methods that are robust to fractional integration have been proposed by \cite{LucVer2015} for a single fractionally integrated factor and by \cite{Erg2017} for pervasive fractionally integrated nuisance. Allowing for a wide range of persistence and an endogenous treatment of integration orders, \cite{HarWei2018} derive a parametric fractionally integrated factor model and apply it to realized covariance matrices.

\noindent
In macroeconomics, fractionally integrated factor models have not played a role so far, although there is comprehensive evidence for long memory and fractional cointegration in the data \citep[cf.\ e.g.][]{HasWol1995, Bai1996, GilRob1997, TscWebWe2013}. 

\noindent
Tackling this issue, this paper aims to provide insights on whether fractional integration techniques have merit at least for a relevant fraction of the numerous and heterogeneous macroeconomic variables typically under study. 
By elaborating fractionally integrated factor models, we construct setups where cross-sectional covariation in the data in levels is driven by fractionally integrated latent factors that may impose cointegration relations. In detail, we propose three different factor models that generalize the aforementioned factor models to fractionally integrated processes. The first model introduces ARFIMA processes in the nonstationary factor model setup of \cite{BarLipLu2016}, while the second model distinguishes between purely fractionally integrated factors that impose cointegration relations and $I(0)$ factors that model common short-run behavior of the data. Finally, our third model generalizes the pre-differencing of the data for standard $I(0)$ factor models by taking fractional differences.

\noindent
As standard factor models they are applicable to high-dimensional data, but bear several advantages: the fractional factor models allow for a joint modelling of data of different persistence, do not require prior assumptions about the degree of persistence of the data but treat the integration orders endogenously, they capture cointegration via the common fractionally integrated factors and are more robust to over-differencing. 

\noindent
For the estimation of the latent factors we introduce a two-stage estimator, where initial factor estimates are obtained via principal components, until the model is cast in state space form such that the Kalman filter and smoother is applicable. For the latter to be computationally feasible, we use ARMA approximations for fractionally integrated processes as suggested in \cite{HarWei2018a}. Estimation of the unknown model parameters and the latent factors is then carried out jointly via an expectation maximization algorithm. 

\noindent 
 In a pseudo out-of-sample forecast experiment for a high-dimensional US macroeconomic data set of \cite{McCNg2015}, we study the forecast accuracy of the fractional factor models. We provide a guided choice among the different models by considering the forecast performance for $112$ macroeconomic variables. Finally, we find comprehensive evidence that adequately  combining fractional integration techniques and factor models can improve forecasts substantially compared to standard factor models and other benchmarks.

\noindent
The remaining paper is organized as follows. Section \ref{Chapter:2} details the construction of fractional factor models. The two-stage estimator for the factors and model parameters is discussed in section \ref{Chapter:3}. Section \ref{Ch:5} compares the forecast performance of the fractional factor models to different benchmarks in a pseudo out-of-sample forecast experiment, until section \ref{Ch:6} concludes.

\section{Fractional Factor Models} \label{Chapter:2}
To begin with, consider the general form of a high-dimensional factor model for possibly fractionally integrated data
\begin{align}\label{eq:factor_general}
	\vy_t = f(\vchi_t) + \vu_t, \qquad t=1,...,T,
\end{align}
where $\vy_t=\left(y_{1,t},...,y_{N,t}\right)' $ is an $N$-dimensional observable time series with entries $y_{i,t} \sim I(d^*_i)$ that are integrated of order $d_i^*$,  $d^*_i \in \mathbb{R}_{\geq 0}$. An integration order $d_i^*$ implies that the fractional difference of a series is $I(0)$, i.e.\ $\Delta^{d_i^*}y_{i,t}\sim I(0)$, $i=1,...,N$. The vector $\vchi_t$ is $r$-dimensional and accounts for common short- and long-run dynamics among the $\vy_t$, and $\vu_t= (u_{1, t},...,u_{N, t})'$ holds the $N$ idiosyncratic errors and has a  diagonal variance. 

\noindent
The fractional difference operator $\Delta^d$ is defined as 
\begin{align}\label{diff}
\Delta^{d} &= (1-L)^{d} = \sum_{j = 0}^{\infty}\pi_{j}(d)L^{j},  \qquad
\pi_{j}(d) = 
\begin{cases}
\frac{j-d-1}{j}\pi_{j-1}(d) &  j = 1, 2, ..., \\ 
1										&   j = 0,
\end{cases} 
\end{align}
and a $+$-subscript amounts to a truncation of an operator at $t \leq 0$, e.g.\ for an arbitrary stochastic process $z_t$, $\Delta^d_+ z_t = \sum_{j=0}^{t-1}\pi_j(d) L^j z_t$ \citep[see e.g.][]{Joh2008}. 
For $d \in \mathbb{N}_0$ fractionally integrated processes nest the standard integer integrated specifications (e.g.\ $I(0)$, $I(1)$, and $I(2)$ processes), whereas $d \in \mathbb{R}_{\geq 0}$ adds flexibility to the weighting of past shocks. Throughout the paper, we adopt the type II definition of fractional integration \citep{MarRob1999} that assumes zero starting values for all fractional processes, and, as a consequence, allows for a seamless treatment of the asymptotically stationary ($d < 1/2$) and the nonstationary ($d \geq 1/2$) case. Due to the type II definition the inverse fractional difference  $\Delta_+^{-d}z_t$ exists.

\noindent
Standard factor models as those considered in \citet{ForHalLi2000}, \citet{BaiNg2002}, and \cite{StoWat2002},
are special cases of \eqref{eq:factor_general}. They extract $r$ common factors of a data set in first and second differences, implying that the series in $\vy_t$ are $I(1)$ and $I(2)$. The common factors in $f(\vchi_t)$ then correspond to the common trends in the Granger representation theorem for cointegrated data \citep[see][]{BarLipLu2016}. 

\noindent 
To give an intuition on how fractional integration affects the long-run properties of a time series we note the following. 
For positive $d$ the autocovariance function of an $I(d)$ process decays at a hyperbolic rate, implying that a shock has a persistent impact on the $I(d)$ process, and a greater $d$ implies a more persistent impact of a shock. While an $I(d=1)$ process is an unweighted sum of past shocks, an $I(d)$ process in general can be interpreted as a weighted sum of past shocks, where the weights depend on $d$ via \eqref{diff}. Furthermore, if a linear combination of a vector $I(d)$ process exists that is integrated of order $b < d$, then the series are cointegrated. Cointegration implies common (fractionally) integrated trends, which our models capture via $f(\vchi_t)$. 
For a discussion of cointegration relations in a fractionally integrated factor model setup we refer to \cite{HarWei2018}.

\noindent
We introduce three different fractionally integrated factor models in the next sections that are nested in \eqref{eq:factor_general} and differ in the functional relation between $\vchi_t$ and $\vy_t$. Section \ref{Ch:2} generalizes nonstationary factor models \citep[cf.\ e.g.][]{BarLipLu2016} to fractionally integrated processes. In section \ref{Ch:3} we distinguish between fractionally integrated factors, that account for long-run co-movements in $\vy_t$, and $I(0)$ factors, that allow common short-run dynamics. Finally, section \ref{Ch:4} generalizes the pre-differencing of standard factor models to fractional differencing.

\subsection{Dynamic Fractional Factor Model}\label{Ch:2}
Consider a simple multivariate unobserved components model
\begin{align}\label{eq:1}
\vy_{t} = \bm{\Lambda} \vf_{t} + \vu_{t}, \qquad t = 1, ..., T,
\end{align}
where $f(\vchi_t)=\mLambda \vf_t$ in \eqref{eq:factor_general}, $\vf_t = (f_{1,t}, ..., f_{r,t})'$ holds the $r$ common factors, $\mLambda$ is a $N \times r$ matrix of factor loadings that is assumed to have full column rank, and the errors 
$\vu_t$ account for idiosyncratic dynamics. The latent factors are assumed to follow $r$ fractionally integrated autoregressive processes
\begin{align}\label{eq:2}
	B_j(L)
	\Delta_+^{d_j}
	f_{j,t} = {\zeta}_{j,t}, \qquad j = 1,...,r,
\end{align}
where ${B}_j(L)= 1-\sum_{k =1}^{p}{B}_{j,k}L^{k}$ is a stable lag polynomial. For the pervasive shocks that drive $\vf_t$  we assume $(\zeta_{1,t},...,\zeta_{r,t})'= \bm{\zeta}_{t}\sim \mathrm{NID}(\bm{0}, \mQ)$, where $\mQ$ is diagonal. A matrix formulation of \eqref{eq:2} follows directly by defining $\vd = (d_1,...,d_r)'$, the lag polynomials $\mD(\vd) = \mathrm{diag}(\Delta^{d_1}_+,...,\Delta^{d_r}_+)$ and $\mB(L) = \mathrm{diag}(B_1(L),...,B_r(L))$, such that $\mB(L)\mD(\vd)\vf_t = \vzeta_t$. 

\noindent
The errors ${u}_{i,t}$ are assumed to be mutually independent and are allowed to be autocorrelated
\begin{align}\label{eq:3}
	\rho_{i}(L)u_{i,t} = \xi_{i,t}, \qquad \xi_{i,t} \sim \mathrm{NID}(0, \sigma_{\xi_i}^2), \qquad i = 1, ..., N,
\end{align} 
where $\rho_{i}(L)=\sum_{k=0}^{p_i}\rho_{i,k}L^k$ is a stationary autoregressive lag polynomial. 

\noindent
As a consequence, the model may explain various degrees of common persistence that characterize the data by common components with long memory. For $d_1 =...=d_r =0$, the model nests the approximate dynamic factor model of \cite{StoWat2002}, while $d_j \in \{0, 1\}$, $j=1,...,r$, yields a nonstationary dynamic factor model with $I(1)$ factors as considered in e.g.\ \cite{BarLipLu2016}. Therefore, the model can be interpreted as a fractional generalization that neither requires prior differencing of the data, nor a prior assumptions about the integration orders.

\subsection{Dynamic Orthogonal Fractional Components} \label{Ch:3}
A more parsimonious factor model is proposed by \cite{HarWei2018}. Their model distinguishes between $r_1$ purely fractional factors $\vf_t^{(1)}=(f_{1, t}^{(1)},...,f_{r_1, t}^{(1)})'$, that establish cointegration relations among the $\vy_t$, and $r_2$ stationary autoregressive components $\vf_t^{(2)}=(f_{1, t}^{(2)},...,f_{r_2, t}^{(2)})'$, that account for common short-run behavior. We consider a slightly more general modification that allows for autocorrelated idiosyncratic errors. The general framework for the dynamic orthogonal fractional components model is then given by 
\begin{align}\label{eq:31}
\vy_{t} &= \bmat
\bm{\Lambda}^{(1)}&\bm{\Lambda}^{(2)}
\emat
\begin{pmatrix}
\vf_t^{(1)} \\ \vf_t^{(2)}
\end{pmatrix}
+ \vu_{t}, &&t=1,...,T\\
\Delta^{d_{j}}_+ f^{(1)}_{j,t} &= \zeta^{(1)}_{j,t}, &&j = 1,...,r_1, \label{DOFC:f1}\\
B^{(2)}_k(L)f^{(2)}_{k,t} &= \zeta^{(2)}_{k,t}, &&k = 1,...,r_2, \label{DOFC:f2}\\
\rho_i(L)u_{i,t}&=\xi_{i,t}, &&i=1,...,N, \label{DOFC:u}
\end{align}
for all $t = 1,...,T$ and $r=r_1 + r_2 \leq N$. The $N$ idiosyncratic shocks $\bm{\xi}_t = (\xi_{1,t},...,\xi_{N,t})'$ are assumed to follow independent Gaussian white noise processes $\bm{\xi}_t \sim \mathrm{NID}(\bm{0}, \bm{H})$. For the pervasive shocks $\vzeta_t^{(1)} = (\zeta_{1, t}^{(1)}, ..., \zeta_{r_1, t}^{(1)})'$, $\vzeta_t^{(2)} = (\zeta_{1, t}^{(2)}, ..., \zeta_{r_2, t}^{(2)})'$ we assume $\mathrm{vec}(\bm{\zeta}^{(1)}_{t}, \bm{\zeta}^{(2)}_t) \sim \mathrm{NID}(0, \mQ)$ where $\mQ$ is diagonal. In addition, we assume that the errors $\vu_t$ are independent of the common components $\vf_t$. 

\noindent
Define the polynomials $\mB^{(2)}(L) = \diag (B_1^{(2)}(L),...,B_{r_2}^{(2)}(L))$, $\mD^{(1)}(\vd) = \diag (\Delta_+^{d_1},...,\Delta_+^{d_{r_1}})$. Then, the model can be shown to be nested in the setup of section \ref{Ch:2} for $\vf_t=\mathrm{vec}(\vf_t^{(1)}, \vf_t^{(2)})$, $\mB(L) = \mathrm{diag}(\bm{I}, \mB^{(2)}(L))$, and $\mD(\vd)=\mathrm{diag}(\mD^{(1)}(\vd), \bm{I})$. In terms of \eqref{eq:factor_general} the model specifies $f(\vchi_t)=\mLambda^{(1)} \vf_t^{(1)} + \mLambda^{(2)} \vf_t^{(2)}$. 

\noindent
Note that the NID assumption on $\vzeta_t$ together with $\mQ$ diagonal yields $r$ orthogonal factors $\vf_t$. This common feature of many unobserved components models, which also applies to the models in sections \ref{Ch:2} and \ref{Ch:4}, reduces estimation uncertainty of the loadings and makes the framework very attractive for forecasting. Since $\vu_t$, $\vzeta_t$ are assumed to be independent, any correlation among the variables in $\vy_{t}$ stems from the common long- and short-run components $\vf^{(1)}_{t}$ and $\vf^{(2)}_{t}$. 

\subsection{Dynamic Factor Model in Fractional Differences} \label{Ch:4}
A third model that completes our toolbox of fractionally integrated factor models takes fractional differences of the observable variables to arrive at a short memory model, where all components are at most $I(0)$. Hence, we contrast our two models from sections \ref{Ch:2} and \ref{Ch:3} with an additional approach that excludes fractional integration from the factors. 
For this purpose we define
\begin{align}\label{sfd:1}
\Delta_+^{d_i^*}y_{i,t}&=\bm{\Lambda}_{i}\vf_{t}+{\xi}_{i,t}, &&t = 1,...., T, \hspace{0.5 cm} i = 1,..., N, \\
B_j(L)f_{j,t} &=  \zeta_{j,t},  &&j = 1, ..., r. \label{sfd:2}
\end{align}
As before, $\vy_t=(y_{1,t},...,y_{N,t})'$ are the observable variables, $\bm{\Lambda} = [\bm{\Lambda}_{1}', ..., \bm{\Lambda}_{N}']'$ holds the factor loadings, and $\vf_{t}=(f_{1,t},...,f_{r, t})'$ contains the $r$ latent factors. In the notation of \eqref{eq:factor_general} this implies $f(\vchi_t) = \mD(-\vd^*)\mLambda \vf_t$ and $\vu_t = \mD(-\vd^*)\vxi_t$ with $\vxi_t = (\xi_{1, t},..., \xi_{N, t})'$, and $\vd^* = (d_1^*,...,d_N^*)'$.

\noindent
By defining $\mB(L) = \mathrm{diag}({B}_{1}(L), ..., {B}_{r}(L))$ as in sections \ref{Ch:2} and \ref{Ch:3} the factors $\vf_t$ can be written as a diagonal VAR process $\mB(L)\vf_t = \vzeta_t$, where $\vzeta_t = (\zeta_{1, t}, ..., \zeta_{r, t})'$. The idiosyncratic and pervasive shocks are assumed to be orthogonal and to follow independent Gaussian white noise processes $\bm{\xi}_t \sim \mathrm{NID}(\bm{0}, \mH)$ and $\bm{\zeta}_t \sim \mathrm{NID}(\bm{0}, \mQ)$. 

\noindent
By taking fractional differences prior to estimating a factor model, our approach generalizes the pre-differencing of standard factor models to the fractional domain. In fractional differences, our model is an approximate dynamic factor model and, therefore, it nests the model of \cite{StoWat2002} for $d_1^*,...,d_N^* \in \mathbb{N}_0$.

\noindent
Taking fractional differences of order $d_i^{*}$ ensures for each $\Delta_+^{d_i^*}y_{i,t}$ that the common and idiosyncratic components are at most $I(0)$. Note that fractional differences are less sensitive to over-differencing compared to integer differences, since the method ensures that the fractional difference of the most persistent factor that loads on $y_{i,t}$ is $I(0)$ $\forall i =1,...,N$.  
\section{Estimation}\label{Chapter:3}
In this section we discuss both, the estimation of the latent factors $f(\vchi_t)$ in \eqref{eq:factor_general} for the three different factor models proposed in sections \ref{Ch:2} to \ref{Ch:4}, and the estimation of the unknown model parameters. The expectation-maximization (EM) algorithm is a natural choice for the estimation of parametric factor models \citep[cf.\ e.g.][]{JunKoo2015} and has been derived for fractionally integrated factor models in \citet[appendix B]{HarWei2018a}. In the E-step, the latent factors are estimated given a set of parameters via the Kalman filter. The M-step then updates the parameter vector by maximizing the likelihood function given the factor estimates from the E-step. Therefore, the EM-algorithm allows for a joint estimation of factors and model parameters. 

\noindent
Since the EM-algorithm is a parametric estimator, it requires starting values for the unknown model parameters in sections \ref{Ch:2} to \ref{Ch:4}. We tackle this problem by proposing a two-stage estimator. The first stage is described in section \ref{sec2.1}. We estimate the latent factors via the nonparametric method of principal components (PC) and propose estimators for the unknown model parameters. We include a consistency proof for the PC estimator for fractionally integrated factors with integration orders in $\mathbb{R}_{\geq 0}$, since consistency of the PC estimator has so far only been shown in more restrictive settings. 

\noindent
The second stage is considered in section \ref{sec2.2}. We derive an approximate state space formulation for each of the factor models in sections \ref{Ch:2} to \ref{Ch:4}, so that the Kalman filter can be applied to estimate the latent factors. Finally, we discuss the joint estimation of the model parameters and the latent factors via the EM algorithm.

\subsection{First Stage: Principal Components}\label{sec2.1}
Sufficient conditions for a consistent estimation of $f(\vchi_t)$ in \eqref{eq:factor_general} via PC were derived in \cite{BaiNg2002} for stationary processes, in \cite{Bai2004} for $I(1)$ common components and in \cite{BaiNg2004} for $\vy_t \sim I(1)$, where nonstationarity may also stem from the idiosyncratic components. For a single fractionally integrated factor and fractional integration orders in $[0, 1]$ \cite{LucVer2015} have shown that the methods of \cite{BaiNg2004} are also applicable. We generalize their results to non-negative integration orders and multiple fractionally integrated factors by showing consistency of the PC estimator for $f(\vchi_t)$ in \eqref{eq:factor_general}.

\noindent
Since PC are estimated via an eigendecomposition of $\mathrm{Var}(\vy_t)$, the applicability of the PC estimator depends crucially on the stability of the variance. For $\mathrm{max}(d_i^*)<0.5$ all $\vy_t$ are asymptotically stationary, and consequently the variance of $\vy_t$ converges as $t \to \infty$. Therefore, the PC estimator satisfies the assumptions of \cite{BaiNg2002}, where assumption A postulates boundedness of $\mathrm{plim}_{T \to \infty} T^{-1} \sum_{t=1}^T \vf_t \vf_t' = \mSigma_f < \infty$. 

\noindent
For $\mathrm{max}(d_i^*)\geq 0.5$ assumption A of \cite{BaiNg2002} is violated. Nonetheless, under a suitable scaling the PC estimator is still consistent. We report an updated set of assumptions for consistency of the PC estimator for nonstationary data in appendix \ref{sec:A1}. 
Following \cite{BaiNg2002} and \cite{Bai2004}, for $d_1 = ... = d_r$ we show that there exists a matrix $\mH$ such that the factors $ \vf_t$ are estimated consistently up to a rotation by PC
\begin{align*}
		\frac{1}{T}\sum_{t=1}^{T}||\hat{\vf}_t - \mH' \vf_t||^2 \xrightarrow{p} 0.
\end{align*}
Expressions for $\hat{\vf}_t$ and $\mH$, together with a detailed proof, are given in appendix \ref{sec:A1}.

\noindent
Whenever there is at least one $d_j \neq d_k$,  $j, k = 1,...,r$, direct estimation of all common fractionally integrated factors is not feasible, since, depending on the scaling of the PC, either the contribution of the least persistent factors to the covariance of $\vy_t$ converges to zero, or the contribution of the most persistent factors diverges. In this case, one needs to separate $\vy_t$ into blocks of equal persistence. Starting with the most persistent block, latent factors are estimated via PC and projected out. The adjusted variables are then added to the next block of $\vy_t$, and the procedure repeats, until a stationary set of variables is obtained. 

\noindent
Having established consistency of PC for the estimation of the latent fractionally integrated factors, we turn to the estimation of the dynamic parameters for the common factors. Since the dynamic properties differ among the three frameworks discussed in section \ref{Chapter:2}, we consider them separately in the following.

\paragraph{ARFI factors}
The common components of the model in section \ref{Ch:2} are assumed to follow $r$ independent autoregressive fractionally integrated processes. Therefore, we rotate the PC estimates via the method of \cite{MatTsa2011} to obtain dynamic orthogonal components. The parameters in \eqref{eq:2} are estimated by maximizing the likelihood function for a multivariate fractionally integrated process \citep[see][]{Nie2004} that is given by
\begin{align}\label{eq:4}
l(\vd, \mB, \mQ)= - \frac{T}{2}\mathrm{log}\vert \mQ \vert -\frac{1}{2}\sum_{t=1}^{T}\left(\mB(L)\mD(\vd)\vf_{t}\right)'\mQ^{-1}\left(\mB(L)\mD(\vd)\vf_{t}\right),
\end{align}
with $\mB = \mathrm{vec}(\mB_{1}, ..., \mB_{p})= \mathrm{vec}(\mB(L))$, and $\mB(L)$, $\mD(\vd)$ as defined in section \ref{Ch:2}. 
Plugging in the first-order condition
$\bm{\mQ} = \bm{\mQ}(\vd, \mB) = T^{-1}\sum_{t=1}^{T}\left(\mB(L)\mD(\vd)\vf_{t}\right)\left(\mB(L)\mD(\vd)\vf_{t}\right)'$,
and dropping the constant terms gives
$
l^{*}(\vd, \mB)=-\frac{T}{2}\mathrm{log}\vert\mQ(\vd, \mB)\vert,
$ which we maximize to get estimates for the unknown parameters $d_{1}, ..., d_{r}$, $\mB_{1}, ..., \mB_{p}$. 
For some data sets the assumption of orthogonal factors may be violated. Then, the diagonal assumption on $\mB(L)$ can be dropped, which does not affect the identification of the fractional factor VAR but increases the number of unknown parameters in \eqref{eq:4}.
Factor loadings $\mLambda$ in \eqref{eq:1} are estimated via ordinary least squares (OLS).

\paragraph{FI and AR factors}
To derive an estimator for the dynamic parameters of the 
model in section \ref{Ch:3}, we first need to distinguish between the space spanned by the purely fractional factors and the stationary autoregressive components. We identify the two factor subspaces of ${\vf}_t^{(1)}$ and ${\vf}_t^{(2)}$ up to a rotation by estimating the fractional cointegration subspace and its orthogonal complement via the semiparametric method of \cite{CheHur2006}, who use eigenvectors of an averaged periodogram matrix of the first $m$ Fourier frequencies to estimate the fractional cointegration subspace. Finally, orthogonal series within the fractional and non-fractional factors are obtained by applying the decorrelation method of \cite{MatTsa2011}. 
The resulting fractional and non-fractional factor estimates are denoted as $\hat{\vf}_{t}^{(1)}$ and $\hat{\vf}_{t}^{(2)}$ respectively.

\noindent
Given the factor estimates $\hat{\vf}_t^{(1)}$ and $\hat{\vf}_t^{(2)}$ together with the observable variables $\vy_t$, we estimate the factor loadings $\bm{\Lambda}$ in \eqref{eq:31} and the AR coefficients of \eqref{DOFC:f2} via OLS. Estimates for the fractional integration orders of the common components in \eqref{DOFC:f1} are obtained by maximizing the likelihood of the $r_1$ ARFIMA(0, $d_j$, 0) processes, $j=1,...,r_1$. 
\paragraph{AR factors}
Due to the stationary representation of the model in section \ref{Ch:4} the PC estimator of \cite{BaiNg2002} is directly applicable. The factors are again decorrelated by means of dynamic orthogonal components of \cite{MatTsa2011}. For a discussion of the consequences when a diagonal representation of the common factors is not feasible we refer to the ARFI case. The dynamic coefficients for the $r$ common factors in \eqref{sfd:2} together with their factor loadings in \eqref{sfd:1} are estimated via OLS.  
\paragraph{AR errors}
An estimate for the idiosyncratic errors is obtained via 
	$\hat{\vu}_t =	\vy_t - \hat{\mLambda}\hat{\vf}_t$.
Since the errors are assumed to follow $N$ independent autoregressive processes, the AR parameters are estimated via OLS. 
\subsection{Second Stage: Kalman Filter and Smoother}\label{sec2.2}
The second stage of our estimator combines factor estimation for a given set of parameters via the Kalman filter and smoother together with parameter optimization via maximum likelihood (ML) in an EM algorithm. For the Kalman filter to be applicable, the different components of our fractional factor models are cast in state space form. 
Note that for a given sample size $T$ a finite state space representation of a type II fractionally integrated process exists but requires a state vector of dimension $T-1$, as \eqref{diff} shows. Since the Kalman filter sequentially inverts the $(T-1) \times (T-1)$ autocovariance matrix for each factor, a full representation of a fractionally integrated process can be very costly from a computational perspective, in particular for long time series. 
Therefore, section \ref{Ch:3.2.1} discusses finite approximations that resemble the dynamic properties of fractionally integrated processes well and are computationally feasible. Section \ref{Ch:3.2.2} derives the state space representation and section \ref{Ch:3.2.3} considers parameter estimation. 
\subsubsection{Approximations for Fractionally Integrated Processes}\label{Ch:3.2.1}
The literature has considered a variety of approximations for long memory processes: \citet[section 4.2]{Pal2007} suggests truncated AR approximations, whereas \cite{ChaPal1998} study truncated MA approximations. In a simulation study, \cite{HarWei2018a} find that small ARMA($v, w$) models with $v, w \in \{3, 4\}$ outperform pure AR and MA approximations even if a high number of lags enter the latter models. In addition, the ML estimator for the integration order is found to be more precise when an ARMA approximation is used. As their simulation studies show, the ML estimates for an approximate representation of a fractionally integrated process converge to the ML estimates of the exact state space representation as $T \to \infty$. For the latter, consistency is proven in \cite{HarTscWeb2019}.

\noindent
 Following the suggestions of \cite{HarWei2018a}, an ARMA($4, 4$) process is used to approximate the purely fractional factors of section \ref{Ch:3}. For ARFIMA processes, whose dynamic properties stem not only from the fractional differencing operator, the approximation quality of ARMA processes is not clear. Therefore, we use pure AR($5$) processes to resemble the properties of the fractional differencing operator in the ARFI-case of section \ref{Ch:2}. For an arbitrary integration paramter $b$ the approximations are given by
\begin{align*}
\Delta_+^{b}\overset{a}{=} \left[\frac{a(L, b)}{m(L, b)}\right]_+=\left[ \frac{1 - a_{1}(b)L - ... - a_{v}(b)L^{v} }{ 1 + m_{1}(b)L + ... + m_{w}(b)L^{w}} \right]_{+},
\end{align*}
where $m_{k}(b)$ are the MA coefficients, $k = 1, ..., w$, and $a_{l}(b)$ are the AR parameters, $l = 1,...,v$, $(v=4, w=4)$ for purely fractional factors as in \eqref{DOFC:f1}, and $(v=5, w=0)$ for ARFI-factors as in \eqref{eq:2}. 

\noindent
The ARMA parameters are chosen beforehand for a given sample size $T$ and fractional integration order $b$ by minimizing the distance between the generic process $x_t = \Delta_+^{-b}z_t=\sum_{j=0}^{t-1}\pi_j(-b)z_{t-j}$ and its approximation $\tilde{x}_t = [m(L, b)a(L, b)^{-1}]_+z_t=\sum_{j=0}^{t-1}\tilde{\psi}_j(-b) z_{t-j}$, $z_t \sim NID(0, 1)$, over all $t=1,...,T$, where $\tilde{\psi}_j(-b)$ is the $j$-th coefficient of the ARMA Wold representation, and $\pi_j(-b)$ is its counterpart from \eqref{diff}. We use the mean squared error over $t=1,...,T$ as the distance measure
$
MSE_{T}^{b} = \frac{1}{T} \sum_{t=1}^{T}\sum_{j=0}^{t-1}\left(\tilde{{\psi}}_j(-b) - {\pi}_j(-b)\right)^2.
$

\noindent
For a given sample size $T$ and integration order $b$, we collect the ARMA coefficients in a $(v+w)$-vector $\vvarphi_T(b)=(a_1(b),...,a_v(b), m_1(b),...,m_w(b))'$. The ARMA coefficient estimates are then defined via $\hat{\vvarphi}_T(b) = \arg \min \vvarphi MSE_T^b$. Following \cite{HarWei2018a}, for a given $T$ optimization is carried out for each value on a grid for $b$. The ARMA coefficients are smoothed using cubic regression splines, such that a continuous, differentiable function $\vvarphi_T(b)$ in $b$ is obtained. The technical details and several simulation studies are contained in \cite{HarWei2018a}.

\noindent
With a smooth function $\vvarphi_T(b)$ in $b$ at hand, parameter optimization for the models in sections \ref{Ch:2} and \ref{Ch:3} can be conducted over the low-dimensional vector of fractional integration orders $\vd$, which keeps the dimension of the parameter vector within the optimization procedure manageable and independent of the length of the ARMA approximations. 

\subsubsection{State Space Representation of Fractional Factor Models} \label{Ch:3.2.2}
With these approximations at hand, we can turn to the state space representation of our fractional factor models. A general representation of a state space model is given by
\begin{align}\label{eq:5}
\tilde{\vy}_{t} &= \mZ\bm{\alpha}_{t} + \bm{\xi}_{t} 
, 
&&\bm{\alpha}_{t+1} = \mT\bm{\alpha}_{t}+\mR\bm{\zeta}_{t+1}, 
\end{align}
where $\va_{t|T}=\mathrm{E}(\bm{\alpha}_{t}|\tilde{\vy}_1,...,\tilde{\vy}_{T}, \bm{\theta})$, $\mP_{t|T}=\mathrm{Var}(\bm{\alpha}_t|\tilde{\vy}_1,...,\tilde{\vy}_T, \bm{\theta})$. The covariance matrices of the disturbances $\mQ = \mathrm{Var}(\bm{\zeta}_{t})$, $\mH = \mathrm{Var}(\bm{\xi}_{t})$ are diagonal $\forall t = 1,..., T$. Without loss of generality, we set $\mQ = \mI$ for all fractional factor models in state space form to distinguish between the factor loadings $\mLambda$ and the variance of the factor innovations $\mQ$. The matrices $\mZ$, $\mT$, $\mR$, and the states $\boldsymbol{\alpha}_t$ differ for the three fractional factor models and are derived separately in the following.
\paragraph{ARFI factors} By approximating the fractional difference operator of our first model that is given in section \ref{Ch:2}, equation \eqref{eq:2} becomes
\begin{align*}
\zeta_{j, t}=B_j(L)(1-L)^{d_j}f_{j, t} \overset{a}{=} B_j(L)a(L, d_j)_+f_{j, t}, \qquad j=1,...,r, \quad t=1,...,T.
\end{align*}
Let $\mA(L, \vd) = \mI - \sum_{j=1}^v \mA_j(\vd)L^j$, $\mA_j(\vd) = \diag(a_{j}(d_1),...,a_{j}(d_r))$, $j=1,...,r$. Then $\mB(L)\mA(L, \vd)=\sum_{k=0}^{p+v}\sum_{l=0}^{k}\mB_{l}\mA_{k-l}(\vd)L^k$ where $\mA_{0}(\vd)=\mB_{0}=-\mI$, $\mA_l(\vd)=\boldsymbol{0} \ \forall l > v$, and $\mB_l = \boldsymbol{0} \ \forall l > p$. 

\noindent
\cite{JunKoo2015} suggest to eliminate autocorrelation in the idiosyncratic errors $\vu_t$ via the observations equation instead of accounting for them via the state equation. We follow their suggestion and manipulate the observable variables
\begin{align}\label{y:tilde}
\tilde{y}_{i,t} = y_{i,t} - \sum_{j=1}^{p_i}\rho_{i,k}y_{i,t-j}, &&\forall i=1,...,N.
\end{align}
For the state space representation we collect the adjusted observable variables in $\tilde{\vy_t}=(\tilde{y}_{1,t},..., \tilde{y}_{N,t})'$ and define $\bm{\Psi}_j = \mathrm{diag}(\rho_{1,j}, ..., \rho_{N, j})$ such that $\tilde{\vy}_t=\vy_t-\sum_{j=1}^{\mathrm{max}(p_i)}\bm{\Psi}_j\vy_{t-j}.$ 

\noindent
A state space representation of \eqref{eq:1}, \eqref{eq:2}, and \eqref{eq:3} follows directly by defining the system matrices $\mT$, $\mZ$, $\mR$, together with the state vector $\valpha_t$ in \eqref{eq:5} as follows. $\mT$ depends on $\vd$ and $\mB(L)$, whereas $\mZ$ depends on $\mLambda$ and ${\rho_i}(L)$, $i=1,...,N$,
	\begin{align*}
	\mT &= \bmat
	\mB_1+\mA_1(\vd)& \cdots &  -\sum_{l=0}^{u-1}\mB_l\mA_{u-1-l}(\vd)&  -\sum_{l=0}^{u}\mB_l\mA_{u-l}(\vd) \\
	\mI & \cdots & \bm{0}& \bm{0}\\
	\vdots & \ddots & \vdots & \vdots \\
	\bm{0}& \cdots &  \mI & \bm{0} \\
	\emat, &&  \mZ = \bmat
	\boldsymbol{\Lambda}' \\-(\boldsymbol{\Psi}_1\boldsymbol{\Lambda})'\\ \vdots \\ -(\boldsymbol{\Psi}_{u-1}\boldsymbol{\Lambda})'
\emat',
	\end{align*}
$\boldsymbol{\alpha}_t=(\vf_{t}',...,\vf_{t-u+1}')'$ holds the states, $\mR=[\mI, \bm{0}]'$ is a selection matrix and $u$ is defined as $\mathrm{max} (p+v, \mathrm{max} (p_i)+1)$. To distinguish between the $r$ factors, we restrict the first $r$ rows of $\bm{\Lambda}$ to form a lower triangular matrix. 
 \paragraph{FI and AR factors}
Starting with the ARMA approximations of the purely fractional factors in \eqref{DOFC:f1}, an approximate representation of the latent fractionally integated factors is given by
$\vf_t^{(1)}\overset{a}{=}\mM(L, \vd)\mA(L, \vd)^{-1}_+\boldsymbol{\zeta}_t^{(1)}$, where the matrix AR and MA polynomials are $\mM(L, \vd)=\mI + \mM_1(\vd)L + ... + \mM_w(\vd)L^w$, $\mM_j(\vd)=\diag (m_j(d_1),...,m_j(d_{r_1}))$, $\mA(L, \vd)=\mI - \mA_1(\vd)L - ... - \mA_v(\vd)L^v$, $\mA_j(\vd)=\diag (a_j(d_1),...,a_j(d_{r_1}))$, and $\mM_j(\vd) = \bm{0}$ $\forall j > w$, $\mA_j(\vd)=\bm{0}$ $\forall j > v$. 

\noindent 
Regarding $\vu_t$, we again eliminate autocorrelation from the idiosyncratic errors by manipulating $\vy_t$ as in \eqref{y:tilde}, i.e.\ $\tilde{\vy}_t = \vy_t - \sum_{j=1}^{\mathrm{max}(p_i)} \mPsi_j \vy_{t-j}= \mPsi(L)\vy_t$. 
For the latent fractionally integrated factors, this implies $\mPsi(L)\mLambda^{(1)}\vf_t^{(1)}\overset{a}{=}\mPsi(L)\mLambda^{(1)}\mM(L, \vd)\mA(L, \vd)^{-1}_+\boldsymbol{\zeta}_t^{(1)}$. 

\noindent
The state space form \eqref{eq:5} of the model is then obtained by imposing a block diagonal structure on $\mT=\mathrm{diag}(\mT^{(1)}, \mT^{(2)})$, where the first block $\mT^{(1)}$ solely depends on $\vd$, whereas the second block $\mT^{(2)}$ depends on $\mB(L)$
	\begin{align*}
		\mT^{(1)} = \bmat
			\mA_{1}(\vd) & \cdots & \mA_{u_1-1}(\vd) & \mA_{u_1}(\vd) \\
			\mI & \cdots & \bm{0} & \bm{0} \\
			\vdots &  \ddots & \vdots & \vdots \\
			\bm{0} & \cdots & \mI & \bm{0}
		\emat, 
		&&\mT^{(2)} = \bmat
			\mB_{1}^{(2)} & \cdots & \mB_{u_2-1}^{(2)} &\mB_{u_{2}}^{(2)} \\
			\mI &  \cdots & \bm{0} &\bm{0} \\
			\vdots &  \ddots & \vdots & \vdots \\
			\bm{0} &  \cdots & \mI& \bm{0}
		\emat,
	\end{align*}
where $u_1=\mathrm{max} (v, w+\mathrm{max} (p_i)+1)$, $u_2=\mathrm{max}(p, \mathrm{max} (p_i)+1)$, and $\mathrm{max}(p_i)$ is the maximum lag order of the idiosyncratic errors $\vu_t$ in \eqref{DOFC:u}. $\mT^{(1)}$ accounts for the dynamic properties of the fractionally integrated factors, whereas $\mT^{(2)}$ models the stationary variation of the $\vf_t^{(2)}$. 

\noindent
The two blocks for $\mZ=\bmat
 \mZ^{(1)} & \mZ^{(2)}
 \emat$ depend on $\bm{\Lambda}^{(1)}$, $\bm{\Lambda}^{(2)}$, $\vd$, and $\boldsymbol{\rho}(L)$, and are given by
 \begin{align*}
 \mZ^{(1)} &= 
 \bmat
 \bm{\Lambda}^{(1)} & \sum_{k=0}^{1}-\bm{\Psi}_k\bm{\Lambda}^{(1)}\mM_{1-k}(\vd) & \cdots & \sum_{k=0}^{u_1-1}-\bm{\Psi}_k\bm{\Lambda}^{(1)}\mM_{u_1-1-k}(\vd)
 \emat, \\
 \mZ^{(2)} &= \bmat
 \bm{\Lambda}^{(2)} & -\bm{\Psi}_1\bm{\Lambda}^{(2)} & \cdots & -\bm{\Psi}_{u_2-1}\bm{\Lambda}^{(2)}
 \emat,
 \end{align*}
 whereas the state vector is given by
\begin{align*}
	&\bm{\alpha}_{t} =
	\begin{pmatrix}
		\bm{\alpha}^{(1)}_{t} \\
		\bm{\alpha}^{(2)}_{t}
	\end{pmatrix}, 
	&&\bm{\alpha}_{t}^{(1)} = \left( \mI -... - \mA_{v}(\vd)L^{v}\right)^{-1}_{+} \begin{pmatrix}
		\bm{\zeta}_{t}^{(1)} \\
		\vdots \\
		\bm{\zeta}_{t-u_1+1}^{(1)} \\
	\end{pmatrix}, 
	&&&\bm{\alpha}_{t}^{(2)} = \begin{pmatrix}
		\vf^{(2)}_{t} \\
		\vdots \\
		\vf^{(2)}_{t-u_2+1}
	\end{pmatrix}.
\end{align*}
Note that $\bm{\Psi}_j = \bm{0} \ \forall j > \mathrm{max}(p_i)$ and $\mB_j^{(2)}=\bm{0}\ \forall j>p$. Finally, the selection matrices are given by $\mR = \mathrm{diag}(\mR^{(1)}, \mR^{(2)})$, $\mR^{(1)} = [\mI, \bm{0}]'$, $\mR^{(2)} = [\mI, \bm{0}]'$, whereas the disturbances in the state equation are $\bm{\zeta}_{t+1} = (\bm{\zeta}_{t+1}^{(1)'}, \bm{\zeta}_{t+1}^{(2)'})'$, $\bm{\zeta}_{t} \sim \mathrm{NID}(\bm{0}, \mQ)$ for all $t = 1,...,T$. Note that for the fractionally integrated factors the observations equation yields $\mZ^{(1)}\valpha_t^{(1)} = \mPsi(L)\mLambda^{(1)} \mM(L, \vd) \valpha_t^{(1)} =\mPsi(L)\mLambda^{(1)} \mM(L, \vd)\mA(L, \vd)_+ \vzeta_{t}^{(1)} \overset{a}{=} \mPsi(L)\mLambda^{(1)}\vf_t^{(1)}$, whereas for the stationary AR factors it gives $\mZ^{(2)}\valpha_t^{(2)} = \mPsi(L)\mLambda^{(2)}\vf_t^{(2)}$.
\\
The $r_1$ independent fractional factors are identified by
imposing a block triangular structure on $\bm{\Lambda}^{(1)}$ while sorting the observations $\vy_t$ with respect to their order of fractional integration in ascending order. As a consequence, the first block of variables in $\vy_t$ is driven by the least persistent factor $f_{1t}$, the second block of variables depends on $f_{1t}$ and $f_{2t}$ whereas the $r_1$-th block with the highest order of fractional integration is allowed to be influenced by all fractional factors. In addition, the first $r_2$ rows of $\bm{\Lambda}^{(2)}$ form a lower triangular matrix to identify the $I(0)$ factors $\vf_t^{(2)}$.
 
 \noindent
 \paragraph{AR factors}
 Since the factors of our third model \eqref{sfd:1} are stationary autoregressive processes, a state space representation as in \eqref{eq:5} follows immediately by defining $\tilde{\vy}_t = (\Delta_+^{d_1^*}y_{1, t},...,\Delta_+^{d_N^*}y_{N, t})'$. The factors enter the state vector directly, whereas their dynamic coefficients in \eqref{sfd:2} are contained in $\mT$. Furthermore, the factor loadings are modelled via $\mZ$, and $\mR$ is again a selection matrix
 	\begin{align*}
 	\bm{\alpha}_{t} = \begin{pmatrix}
 	\vf_{t} \\
 	\vf_{t-1} \\
 	\vdots \\
 	\vf_{t-p}
 	\end{pmatrix}, \qquad
 	\mT = \bmat
 	\mB_1 & \cdots & \mB_{p} & \bm{0} \\
 	\mI&  \cdots & \bm{0} & \bm{0} \\
 	\vdots & \ddots & \vdots & \vdots \\
 	\bm{0} & \cdots & \bm{I} & \bm{0}
 	\emat, \qquad 
 	\mZ = \bmat
 	\bm{\Lambda}'\\ \bm{0} \\ \vdots \\ \bm{0}
 	\emat', \qquad
 	\bm{R} = \bmat
 	\bm{I} \\
 	\bm{0}
 	\emat.
 	\end{align*}
For identification of the factors, we restrict the first $r$ rows of $\bm{\Lambda}$ to be lower triangular.

\noindent
\subsubsection{Parameter Estimation}\label{Ch:3.2.3}
We collect the unknown parameters in $\vd$, ${\mLambda}$, $\mB_1, ..., \mB_p$, $\rho_{1,1}, ..., \rho_{N,p_N}$, and $\mH$, that enter the system matrices of the state space model $\mT$, $\mZ$, and $\mH$, in a parameter vector $\boldsymbol{\theta}$. To estimate ${\vtheta}$ we adopt the approach of \cite{HarWei2018a}, who derive an analytical solution to the optimization problem of the expected complete Gaussian likelihood function of the state space model, together with a computationally fast combination of the EM algorithm and gradient-based optimization. 

\noindent
In the expectation step of the EM algorithm, we estimate the smoothed states and disturbances, together with the corresponding covariance matrices for a given set of parameters $\hat{\vtheta}_j$ via the Kalman filter and smoother. 
The M-step then maximizes the likelihood given the Kalman filter and smoother estimates to obtain $\hat{\boldsymbol{\theta}}_{j+1}$. 

\noindent
After either a convergence criterion is satisfied, or a predefined number of iterations $m$ is reached, the resulting parameter estimates from the EM algorithm $\hat{\bm{\theta}}$ are used as starting values for the maximum likelihood estimation via the BFGS algorithm, which uses the analytical solution for the score vector of \cite{HarWei2018a}, since the EM algorithm was found to be slow around the optimum. 

\noindent
In case of the stationary factor model in fractional differences, the matrices $\mT$ and $\mZ$ are functions of two disjoint parameter spaces and, therefore, a simplification of the EM algorithm is obtained directly by solving the score vector for $\mathrm{vec}(\mT)$ and $\mathrm{vec}(\mZ)$  \citep[see][appendix A.2]{JunKoo2015}. 

\noindent
Forecasts are obtained by shifting the system one period ahead and plugging in the smoothed factor estimates from the Kalman filter.
\section{Macroeconomic Forecasting}\label{Ch:5}
\subsection{Forecast Design}
Having discussed the estimation of $f(\vchi_t)$ together with the unknown parameters for the three fractionally integrated factor models in sections \ref{Ch:2}--\ref{Ch:4}, we investigate their predictive accuracy when neither the DGP, nor the starting values, nor the number of factors, are known to the researcher. 
For this purpose we study the forecast performance of our three models in a pseudo out-of-sample forecast experiment with an underlying data set for the United States of America that consists of 112 macroeconomic variables and spans from January 1960 to December 2016 \citep[see][]{McCNg2015}. 

\noindent
To compare the forecast performance of the different factor models, we report the resulting mean squared prediction errors (MSPE) for a selected subset of economic variables that represent different segments of the economy. 

\noindent
All forecast models allow for seven common factors in the data, which is suggested by the $PC_{(p3)}$ criterion of \cite{BaiNg2002} after deterministic terms have been eliminated from the fractionally differenced data set. Lag lengths of the different AR polynomials for the common factors and idiosyncratic components are chosen via the Bayesian Information Criterion (BIC). For the first forecasting period we obtain starting values for the Kalman filter from the principal components estimator, as described in section \ref{sec2.1}. In all subsequent periods the optimized parameters from the preceding step are used as starting values. Finally, the number of iterations of the EM algorithm is set to ten.
To distinguish between the first stage and the second stage estimator, we denote the principal components forecasts as \textbf{PC} and the Kalman filter forecasts as \textbf{KF}. Abbreviations for the three fractional factor models are: Dynamic fractional factor model (\textbf{DFFM}) in section \ref{Ch:2}, dynamic orthogonal fractional components (\textbf{DOFC}) in section \ref{Ch:3}, and dynamic factor model in fractional differences (\textbf{DFFD}) in section \ref{Ch:4}.

\noindent
Forecasts are conducted for horizons $h=1,...,12$ in a recursive window forecast experiment, where the first forecast period is January 2000, whereas the last is December 2016, leading to 204 forecasts for 112 variables and 12 horizons.

\noindent
The \textbf{DOFC} model introduced in section \ref{Ch:3} includes $r_1=3$ fractionally integrated factors, since a higher number was not found to increase the forecast precision substantially. As a consequence, the number of remaining $I(0)$ factors is set to $r_2=4$. The latter restriction is confirmed by the $PC_{(p3)}$ criterion of \cite{BaiNg2002}, which suggests four factors after the fractionally integrated factors have been projected out. 

\noindent
A stationary data set for the \textbf{DFFD} model in section \ref{Ch:4} is obtained by estimating the integration order of each $y_{i,t}$ via the exact local Whittle estimator of \cite{ShiPhi2005} with a tuning parameter of $0.5$ and taking fractional differences.

\noindent
In addition, we include four benchmark models to evaluate the forecast performance of the fractional factor models relative to widely used alternatives. The first benchmark is an autoregressive model (\textbf{AR}) where the AR lag order is chosen via the Akaike Information Criterion for each $y_{i,t}$.
The second benchmark is a standard approximate dynamic factor model \citep[cf.\ e.g.][]{StoWat2002} that is estimated via principal components (\textbf{PC}) based on a pre-differenced data set, i.e. $\Delta^{k_i}y_{i,t+h}=\bm{\Lambda}_i\vf_{t+h}+\xi_{i,t+h}$, $\bm{\phi}(L)\bm{f}_{t+h}= \bm{\zeta}_{t+h}$,  where $\xi_{i,t}$, $\zeta_{j,t}$ are mutually independent and white noise $\forall t=1,...,T$ and $k_i$ is an integer that is taken from \cite{McCNg2015}. Our third model adds lagged dependent variables to the approximate dynamic factor model. It is given by $\bm{\phi}(L)\bm{f}_{t+h}= \bm{\zeta}_{t+h}$, $c_i(L)\Delta^{k_i}y_{i,t+h}=\bm{\Lambda}_i\vf_{t+h}+\xi_{i,t+h}$ where $\xi_{i,t}$, $\zeta_{j,t}$ are again mutually independent and white noise $\forall t=1,...,T$. We denote it as \textbf{PCAR}. Finally, the last benchmark is the so-called factor-augmented error-correction model (\textbf{FECM}), which separates the observable variables into two disjoint samples $\vy = (\vy^{(1)'}, \vy^{(2)'})'$ and shrinks the latter sample via principal components to $\hat{\vf}$. A vector error-correction model is then estimated for $(\vy^{(1)'}, \hat{\vf}')'$. Details on the forecast properties are found in \cite{BanMarMa2014}. Since we only obtain predictions for $\vy^{(1)}$, the $\textbf{FECM}$ results are only reported in tables \ref{ta:1} and \ref{ta:2}.

\subsection{Forecast Results}
Table \ref{ta:3} shows for a given forecast horizon $h$ how often each specification leads to the smallest MSPE for all $112$ variables. Hence, it illustrates how frequently fractional factor models are able to outperform widely used forecast methods like autoregressive models and principal components of integer differences. 
To draw inference on the extent of forecast improvement from the fractional factor models, the tables \ref{ta:1} and \ref{ta:2} report the relative MSPE for the twelve depicted variables and for $h = 1, 2, 3, 6, 9, \text{ and }12$. Consequently, they also show how large the forecast accuracy fluctuates for each specification and highlight the robustness of the forecast results when a model is not chosen to be the best one. 

\linespread{0.5}{
\begin{table}[h!]
	\centering
	\begin{tabular}{l|rrrrrrrrr}
		\hline
		&\multicolumn{3}{c|}{Benchmarks} & \multicolumn{2}{c|}{DFFM} & \multicolumn{2}{c|}{DOFC} & \multicolumn{2}{c}{DFFD} \\
		Horizon& \multicolumn{1}{r}{AR} & {PC} & \multicolumn{1}{r|}{PCAR} & {PC} & \multicolumn{1}{r|}{KF} & {PC} & \multicolumn{1}{r|}{KF} & {PC} & \multicolumn{1}{r}{KF}  \\ 
		\hline
	1 &  14 &  10 &  16 &   7 &   0 &   3 &  26 &  20 &  16 \\ 
	2 &  14 &  12 &  11 &   8 &   0 &   1 &  26 &  24 &  16 \\ 
	3 &  14 &  10 &   6 &   6 &   1 &   5 &  28 &  23 &  19 \\ 
	4 &  17 &   9 &   7 &  11 &   2 &   4 &  25 &  20 &  17 \\ 
	5 &  15 &  11 &   7 &  10 &   1 &   5 &  27 &  22 &  14 \\ 
	6 &  17 &   8 &   5 &  10 &   6 &   6 &  23 &  19 &  18 \\ 
	7 &  15 &   9 &   3 &  10 &   4 &   7 &  26 &  18 &  20 \\ 
	8 &  14 &   8 &   3 &  10 &  12 &   7 &  20 &  19 &  19 \\ 
	9 &  15 &   8 &   3 &  10 &  10 &   7 &  22 &  17 &  20 \\ 
	10 &  15 &   8 &   3 &  11 &  14 &   7 &  16 &  15 &  23 \\ 
	11 &  15 &   8 &   3 &  13 &  15 &   7 &  13 &  15 &  23 \\ 
	12 &  14 &   7 &   3 &  10 &  17 &   9 &  13 &  16 &  23 \\ 
		\hline
	\end{tabular}
\caption{Frequency of smallest MSPE: The table shows how often, for a given forecast horizon, a specification came with the smallest mean squared prediction error of all models. }
\label{ta:3}
\end{table}}%
\noindent
We find that fractional factor models tend to outperform classical autoregressive models, pre-differenced principal components models and mixtures of these two model classes. Over all 1344 conducted forecasts, the benchmarks only exhibit a smaller MSPE than the fractional factor models in 357 cases (26.6\%), as table \ref{ta:3} shows. Hence, for the remaining 987 forecasts (73.4 \%) the smallest MSPE is achieved by one of the six fractional factor models. Within the benchmarks, one often finds that principal components come with the smallest MSPE. Nonetheless, they are often beaten by one of the fractional factor models. Among those, the dynamic orthogonal fractional components model in state space form produces the best predictions for forecast horizons up to 9 months most frequently. 

\noindent
In addition to the good performance of the DOFC-KF specification, the DFFD models complement the predictive power of fractional factor models. Whenever the DOFC-KF model does not provide the best forecasts, the fractionally differenced models are likely to exhibit the smallest MSPE. Furthermore, principal components are found to perform relatively well at least for smaller forecast horizons when the data is in fractional differences, whereas they are typically beaten by the state space formulation in the DOFC framework. This might be a result of the additional structure that is imposed on the DOFC-KF model via the block-triangular identification of the fractional factors relative to the DOFC-PC case, whereas only little additional structure is imposed on the DFFD-KF specification relative to principal components. For larger forecast horizons, the forecast performance of the DFFD-KF model improves, leading to the highest amount of best predictions for $h = 10, 11, 12$. 

\linespread{0.5}{
\begin{table}[p]
	\centering
	\resizebox{15cm}{!}{%
	\begin{tabular}{lllllrrrrrr}
		\hline
		&\multicolumn{4}{c|}{Benchmarks} & \multicolumn{2}{c|}{DFFM} & \multicolumn{2}{c|}{DOFC} & \multicolumn{2}{c}{DFFD} \\
		& \multicolumn{1}{r}{AR} & {PC} & PCAR & \multicolumn{1}{r|}{FECM} & {PC} & \multicolumn{1}{r|}{KF} & {PC} & \multicolumn{1}{r|}{KF} & {PC} & \multicolumn{1}{r}{KF}  \\ 
		\hline
				\multicolumn{11}{c}{Horizon h = 1} \\
		INDPRO & 1.00 & 0.92 & 0.92 & 1.22 & 1.60 & 3.26 & 1.96 & 1.06 & \underline{0.90} & 0.99 \\ 
		UNRATE & 1.00 & 0.96 & 0.95 & 0.88 & 1.06 & 2.67 & 1.32 & 0.91 & 0.89 & \underline{0.88} \\ 
		AWOTMAN & 1.00 & \underline{0.85} & 0.90 & 0.92 & 1.25 & 2.02 & 1.40 & 0.91 & 0.93 & 0.96 \\ 
		HOUST & 1.00 & 0.71 & \underline{0.70} & 0.99 & 0.98 & 1.36 & 0.94 & 0.87 & 1.09 & 1.07 \\ 
		AMBSL & 1.00 & 0.86 & 1.17 & \underline{0.69} & 0.77 & 2.62 & 0.77 & 0.81 & 2.78 & 3.15 \\ 
		TOTRESNS & 1.00 & 0.71 & 1.45 & 0.69 & \underline{0.66} & 2.28 & 0.70 & 0.69 & 4.84 & 5.14 \\ 
		S.P.500 & 1.00 & 1.08 & 1.08 & 1.01 & 1.04 & 2.67 & 1.24 & 1.02 & 1.06 & \underline{0.99} \\ 
		FEDFUNDS & \underline{1.00} & 2.51 & 2.36 & 2.64 & 1.15 & 4.43 & 1.17 & 1.21 & 3.53 & 1.25 \\ 
		EXUSUKx & \underline{1.00} & 1.18 & 1.08 & 1.08 & 1.11 & 2.75 & 1.17 & 1.06 & 1.12 & 1.10 \\ 
		CPIAUCSL & 1.00 & 0.64 & 0.96 & 0.45 & 1.77 & 2.01 & 1.58 & \underline{0.41} & 0.49 & 0.49 \\ 
		PCEPI & 1.00 & 0.69 & 0.98 & 0.50 & 3.32 & 1.98 & 2.70 & \underline{0.41} & 0.47 & 0.47 \\ 
		CES0600000008 & 1.00 & 0.88 & 1.14 & 0.48 & 2.50 & 1.00 & 3.06 & \underline{0.30} & 0.48 & 0.41 \\ 
				\multicolumn{11}{c}{Horizon h = 2} \\
		INDPRO & 1.00 & 0.90 & 0.90 & 1.35 & 2.10 & 2.38 & 2.63 & 1.15 & \underline{0.81} & 0.99 \\ 
		UNRATE & 1.00 & 1.03 & 0.98 & \underline{0.79} & 1.06 & 2.10 & 1.62 & 0.96 & 0.83 & 0.86 \\ 
		AWOTMAN & 1.00 & 0.81 & 0.92 & \underline{0.80} & 1.22 & 1.69 & 1.50 & 0.86 & 0.96 & 1.02 \\ 
		HOUST & 1.00 & \underline{0.71} & 0.71 & 1.01 & 0.99 & 1.18 & 0.95 & 0.99 & 1.02 & 1.00 \\ 
		AMBSL & 1.00 & 0.79 & 1.27 & 0.86 & 0.76 & 1.25 & 0.80 & \underline{0.72} & 1.52 & 1.65 \\ 
		TOTRESNS & 1.00 & 0.69 & 1.56 & 0.75 & 0.67 & 1.17 & 0.75 & \underline{0.65} & 2.45 & 2.51 \\ 
		S.P.500 & 1.00 & 1.14 & 1.16 & 1.18 & 1.05 & 1.55 & 1.30 & 1.02 & 1.11 & \underline{0.98} \\ 
		FEDFUNDS & 1.00 & 1.66 & 1.79 & 2.66 & \underline{0.91} & 2.08 & 0.92 & 0.95 & 2.40 & 1.07 \\ 
		EXUSUKx & \underline{1.00} & 1.20 & 1.13 & 1.15 & 1.18 & 1.71 & 1.22 & 1.05 & 1.11 & 1.08 \\ 
		CPIAUCSL & 1.00 & 0.61 & 0.98 & 0.54 & 1.76 & 0.50 & 1.62 & \underline{0.42} & 0.60 & 0.55 \\ 
		PCEPI & 1.00 & 0.62 & 0.99 & 0.56 & 3.24 & 0.45 & 2.66 & \underline{0.39} & 0.52 & 0.48 \\ 
		CES0600000008 & 1.00 & 0.99 & 1.16 & 0.37 & 2.16 & 0.64 & 3.26 & \underline{0.21} & 0.34 & 0.28 \\ 
				\multicolumn{11}{c}{Horizon h = 3} \\
	INDPRO & 1.00 & 1.04 & 1.04 & 1.50 & 2.41 & 2.47 & 2.93 & 1.28 & \underline{0.81} & 1.00 \\ 
	UNRATE & 1.00 & 1.15 & 1.08 & \underline{0.82} & 1.11 & 2.21 & 1.76 & 1.04 & 0.82 & 0.88 \\ 
	AWOTMAN & 1.00 & 0.79 & 0.95 & \underline{0.76} & 1.10 & 1.56 & 1.47 & 0.84 & 1.03 & 1.02 \\ 
	HOUST & 1.00 & \underline{0.70} & 0.70 & 1.01 & 0.88 & 1.11 & 0.85 & 0.95 & 1.09 & 0.99 \\ 
	AMBSL & 1.00 & 0.77 & 1.41 & 0.97 & 0.74 & 0.95 & 0.81 & \underline{0.66} & 1.16 & 1.22 \\ 
	TOTRESNS & 1.00 & 0.73 & 1.62 & 0.81 & 0.69 & 0.94 & 0.78 & \underline{0.64} & 1.81 & 1.81 \\ 
	S.P.500 & 1.00 & 1.25 & 1.27 & 1.30 & 1.08 & 1.45 & 1.35 & 1.02 & 1.18 & \underline{1.00} \\ 
	FEDFUNDS & 1.00 & 1.31 & 1.41 & 2.66 & 0.85 & 1.54 & \underline{0.82} & 0.91 & 1.92 & 1.03 \\ 
	EXUSUKx & \underline{1.00} & 1.26 & 1.18 & 1.21 & 1.22 & 1.55 & 1.29 & 1.07 & 1.12 & 1.08 \\ 
	CPIAUCSL & 1.00 & 0.57 & 1.01 & 0.54 & 1.71 & 0.47 & 1.63 & \underline{0.38} & 0.63 & 0.56 \\ 
	PCEPI & 1.00 & 0.60 & 1.01 & 0.58 & 3.12 & 0.43 & 2.72 & \underline{0.36} & 0.53 & 0.47 \\ 
	CES0600000008 & 1.00 & 1.13 & 1.18 & 0.40 & 1.79 & 0.56 & 2.95 & \underline{0.18} & 0.27 & 0.22 \\ 
	\end{tabular}
}%
\noindent
\caption{Selected relative mean squared prediction errors for h=1, 2, and 3. Variable codes are INDPRO: industrial production index; UNRATE: unemployment rate; AWOTMAN: average weekly overtime hours in the manufacturing business; HOUST: housing starts; AMBSL: St. Louis adjusted monetary base; TOTRESNS: total reserves of depository institutions; S.P.500: S\&P500 index; FEDFUNDS: effective federal funds rate; EXUSUKx: US / UK foreign exchange rate; CPIAUCSL: consumer price index; PCEPI: personal consumption index; CES0600000008: average hourly earnings}
\label{ta:1}
\end{table}
\begin{table}[p]
	\centering
	\resizebox{15cm}{!}{%
		\begin{tabular}{lllllrrrrrr}
			\hline
			&\multicolumn{4}{c|}{Benchmarks} & \multicolumn{2}{c|}{DFFM} & \multicolumn{2}{c|}{DOFC} & \multicolumn{2}{c}{DFFD} \\
			& \multicolumn{1}{r}{AR} & {PC} & {PCAR} &\multicolumn{1}{r|}{FECM} & {PC} & \multicolumn{1}{r|}{KF} & {PC} & \multicolumn{1}{r|}{KF} & {PC} & \multicolumn{1}{r}{KF}  \\ 
			\hline
					\multicolumn{11}{c}{Horizon h = 6} \\
		INDPRO & 1.00 & 1.27 & 1.27 & 1.59 & 2.10 & 1.82 & 2.55 & 1.29 & \underline{0.95} & 1.08 \\ 
		UNRATE & \underline{1.00} & 1.58 & 1.46 & 1.11 & 1.25 & 2.04 & 1.86 & 1.25 & 1.05 & 1.06 \\ 
		AWOTMAN & 1.00 & 0.92 & 1.12 & \underline{0.76} & 1.03 & 1.33 & 1.40 & 0.87 & 1.17 & 1.13 \\ 
		HOUST & 1.00 & \underline{0.63} & 0.63 & 1.02 & 0.77 & 0.88 & 0.78 & 0.82 & 0.98 & 0.87 \\ 
		AMBSL & 1.00 & 0.89 & 1.87 & 0.87 & 0.62 & 0.68 & 0.76 & \underline{0.55} & 0.89 & 0.90 \\ 
		TOTRESNS & 1.00 & 0.95 & 1.62 & 0.75 & 0.63 & 0.68 & 0.76 & \underline{0.57} & 1.18 & 1.16 \\ 
		S.P.500 & \underline{1.00} & 1.50 & 1.53 & 1.40 & 1.09 & 1.21 & 1.47 & 1.00 & 1.19 & 1.01 \\ 
		FEDFUNDS & 1.00 & 1.28 & 1.33 & 2.70 & 0.90 & 1.21 & \underline{0.76} & 0.91 & 1.36 & 1.06 \\ 
		EXUSUKx & \underline{1.00} & 1.31 & 1.30 & 1.46 & 1.23 & 1.18 & 1.35 & 1.03 & 1.06 & 1.00 \\ 
		CPIAUCSL & 1.00 & 0.57 & 1.09 & 0.46 & 1.49 & \underline{0.26} & 1.65 & 0.31 & 0.63 & 0.54 \\ 
		PCEPI & 1.00 & 0.59 & 1.06 & 0.54 & 2.69 & \underline{0.25} & 2.85 & 0.34 & 0.52 & 0.44 \\ 
		CES0600000008 & 1.00 & 1.82 & 1.27 & 0.54 & 0.86 & 0.50 & 2.38 & 0.17 & 0.21 & \underline{0.16}  \\ 
					\multicolumn{11}{c}{Horizon h = 9} \\
		INDPRO & 1.00 & 1.43 & 1.43 & 1.85 & 1.83 & 1.48 & 2.30 & 1.25 & \underline{1.00} & 1.12 \\ 
		UNRATE & \underline{1.00} & 1.88 & 1.77 & 1.48 & 1.23 & 1.76 & 1.65 & 1.26 & 1.17 & 1.12 \\ 
		AWOTMAN & 1.00 & 1.11 & 1.33 & \underline{0.74} & 1.01 & 1.20 & 1.34 & 0.91 & 1.21 & 1.14 \\ 
		HOUST & 1.00 & \underline{0.60} & 0.61 & 1.02 & 0.72 & 0.78 & 0.73 & 0.76 & 0.94 & 0.85 \\ 
		AMBSL & 1.00 & 1.31 & 3.30 & 0.91 & 0.59 & 0.59 & 0.79 & \underline{0.51} & 0.81 & 0.80 \\ 
		TOTRESNS & 1.00 & 1.54 & 1.74 & 0.73 & 0.61 & 0.60 & 0.76 & \underline{0.53} & 0.97 & 0.94 \\ 
		S.P.500 & 1.00 & 1.76 & 1.79 & 1.50 & 1.09 & 1.13 & 1.54 & \underline{0.99} & 1.19 & 1.01 \\ 
		FEDFUNDS & 1.00 & 1.41 & 1.50 & 2.55 & 0.93 & 1.16 & \underline{0.79} & 0.96 & 1.18 & 1.07 \\ 
		EXUSUKx & 1.00 & 1.40 & 1.41 & 1.78 & 1.29 & 1.08 & 1.42 & 1.01 & 1.05 & \underline{0.98} \\ 
		CPIAUCSL & 1.00 & 0.64 & 1.13 & 0.43 & 1.32 & \underline{0.21} & 1.56 & 0.31 & 0.60 & 0.51 \\ 
		PCEPI & 1.00 & 0.65 & 1.10 & 0.53 & 2.40 & \underline{0.21} & 2.70 & 0.37 & 0.50 & 0.42 \\ 
		CES0600000008 & 1.00 & 2.65 & 1.36 & 0.60 & 0.49 & 0.46 & 1.81 & 0.19 & 0.16 & \underline{0.12} \\ 
						\multicolumn{11}{c}{Horizon h = 12} \\
		INDPRO & 1.00 & 1.57 & 1.58 & 2.00 & 1.70 & 1.28 & 2.20 & 1.20 & \underline{0.99} & 1.11 \\ 
		UNRATE & \underline{1.00} & 2.20 & 2.09 & 1.68 & 1.16 & 1.50 & 1.45 & 1.22 & 1.18 & 1.12 \\ 
		AWOTMAN & 1.00 & 1.26 & 1.50 & \underline{0.78} & 0.99 & 1.12 & 1.27 & 0.94 & 1.17 & 1.10 \\ 
		HOUST & 1.00 & \underline{0.61} & 0.61 & 1.11 & 0.72 & 0.74 & 0.71 & 0.74 & 0.94 & 0.86 \\ 
		AMBSL & 1.00 & 1.84 & 6.01 & 0.77 & 0.52 & 0.52 & 0.75 & \underline{0.45} & 0.70 & 0.69 \\ 
		TOTRESNS & 1.00 & 2.36 & 1.93 & 0.63 & 0.55 & 0.53 & 0.70 & \underline{0.48} & 0.80 & 0.77 \\ 
		S.P.500 & 1.00 & 2.00 & 2.03 & 1.57 & 1.07 & 1.09 & 1.59 & \underline{0.98} & 1.20 & 1.00 \\ 
		FEDFUNDS & 1.00 & 1.54 & 1.65 & 2.39 & 0.95 & 1.11 & \underline{0.83} & 1.00 & 1.10 & 1.06 \\ 
		EXUSUKx & 1.00 & 1.53 & 1.58 & 2.01 & 1.36 & 1.05 & 1.48 & 0.99 & 1.05 & \underline{0.98} \\ 
		CPIAUCSL & 1.00 & 0.79 & 1.18 & 0.40 & 1.14 & \underline{0.15} & 1.43 & 0.34 & 0.54 & 0.46 \\ 
		PCEPI & 1.00 & 0.76 & 1.14 & 0.51 & 2.08 & \underline{0.16} & 2.47 & 0.42 & 0.46 & 0.39 \\ 
		CES0600000008 & 1.00 & 4.01 & 1.47 & 0.65 & 0.36 & 0.45 & 1.52 & 0.23 & 0.15 & \underline{0.11}  \\ 
		\end{tabular}
	}%
	\caption{Selected relative mean squared prediction errors for h=6, 9, and 12. Variable codes are INDPRO: industrial production index; UNRATE: unemployment rate; AWOTMAN: average weekly overtime hours in the manufacturing business; HOUST: housing starts; AMBSL: St. Louis adjusted monetary base; TOTRESNS: total reserves of depository institutions; S.P.500: S\&P500 index; FEDFUNDS: effective federal funds rate; EXUSUKx: US / UK foreign exchange rate; CPIAUCSL: consumer price index; PCEPI: personal consumption index; CES0600000008: average hourly earnings}
	\label{ta:2}
\end{table}}%
\noindent
We are able to uncover more details about the forecast performance of fractional factor models by having a closer look at the tables \ref{ta:1} and \ref{ta:2} that visualize the relative MSPEs for selected variables and forecast horizons $h$. We find that gains from the fractional factor models can be large, relative to the four benchmarks. In many cases, fractional factor models can reduce the MSPE relative to the AR benchmark by more than 25\%.  For some target variables, the MSPE is cut by half when fractional factor models are used, and reductions of more than 80\% are possible. 

\noindent
Within the class of fractional factor models, we find the DOFC-KF specification to perform best. For $h=1,2,3$, the most accurate predictions for the consumer price index, personal consumption index and average hourly earnings are obtained from the DOFC-KF specification, which reduces the MSPE relative to the AR benchmark by more than 50\%. In addition, the DOFC-KF specification exhibits the smallest MSPE for the St. Louis adjusted monetary base, total reserves of depository institutions and the S\&P500 frequently. The stable and reliable performance of the DOFC-KF forecasts is illustrated by the fact that their largest relative MSPE is 1.29, whereas the smallest relative MSPE is 0.17. 

\noindent
The good performance of the DOFC-KF specification is complemented by the DFFD model. For the industrial production index, the DFFD-PC specification exhibits the smallest MSPE for any forecast horizon. In addition, the DFFD-KF specification produces accurate predictions for the S\&P500, average hourly earnings and the US / UK foreign exchange rate. Furthermore, its forecast performance is almost as stable as the DOFC-KF prediction quality. 

\noindent
Finally, the DFFM model, which serves as the most general framework as it nests the two remaining fractional factor model formulations, cannot compete with the other fractional factor models, as its predictive power fluctuates largely. Nonetheless, for larger forecast horizons, the DFFM-KF formulation produces accurate forecasts for the consumer price and personal consumption index.

\noindent
Note that the only difference between the benchmark PC model and the DFFD-PC specification is the pre-differencing. As one can see, the two models coincide regarding their relative performance to the AR benchmark. 
 The advantages over the AR model are therefore likely to result from cross-sectional dependencies that are detected by the common factors. In addition, the better performance of the DFFD-PC model can be explained by the sensitivity of standard PC methods to spurious coefficients, as \cite{Fra2017} argue. 

\noindent
Turning to the DOFC-KF specification, which explicitly models fractional cointegration relations instead of eliminating them as in the DFFD model, we note that the forecast quality of the two models is similar for many predictions. Nonetheless, gains from the DOFC-KF specification relative to the DFFD model can be large, especially in situations where the latter produces a relative MSPE $> 1$. Consider e.g.\ the forecasts for the adjusted monetary base (AMBSL) and the total reserves of depository institutions (TOTRESNS) in tables \ref{ta:1} and \ref{ta:2}, where the DOFC-KF and the FECM model perform well, wheres the DFFD-KF model yields large MSPEs. While the former two models take cointegration into account, the DFFD-KF model eliminates long-run components by prior differencing and is likely to produce over-differenced short-run components. Hence, the better performance of the DOFC-KF model over the DFFD-KF model is likely to result from cointegration relations and over-differencing of additive short-run factors. 

\begin{figure}[hp!]
	\includegraphics[scale=0.58, trim={1.6cm 0cm 1cm 1cm}]{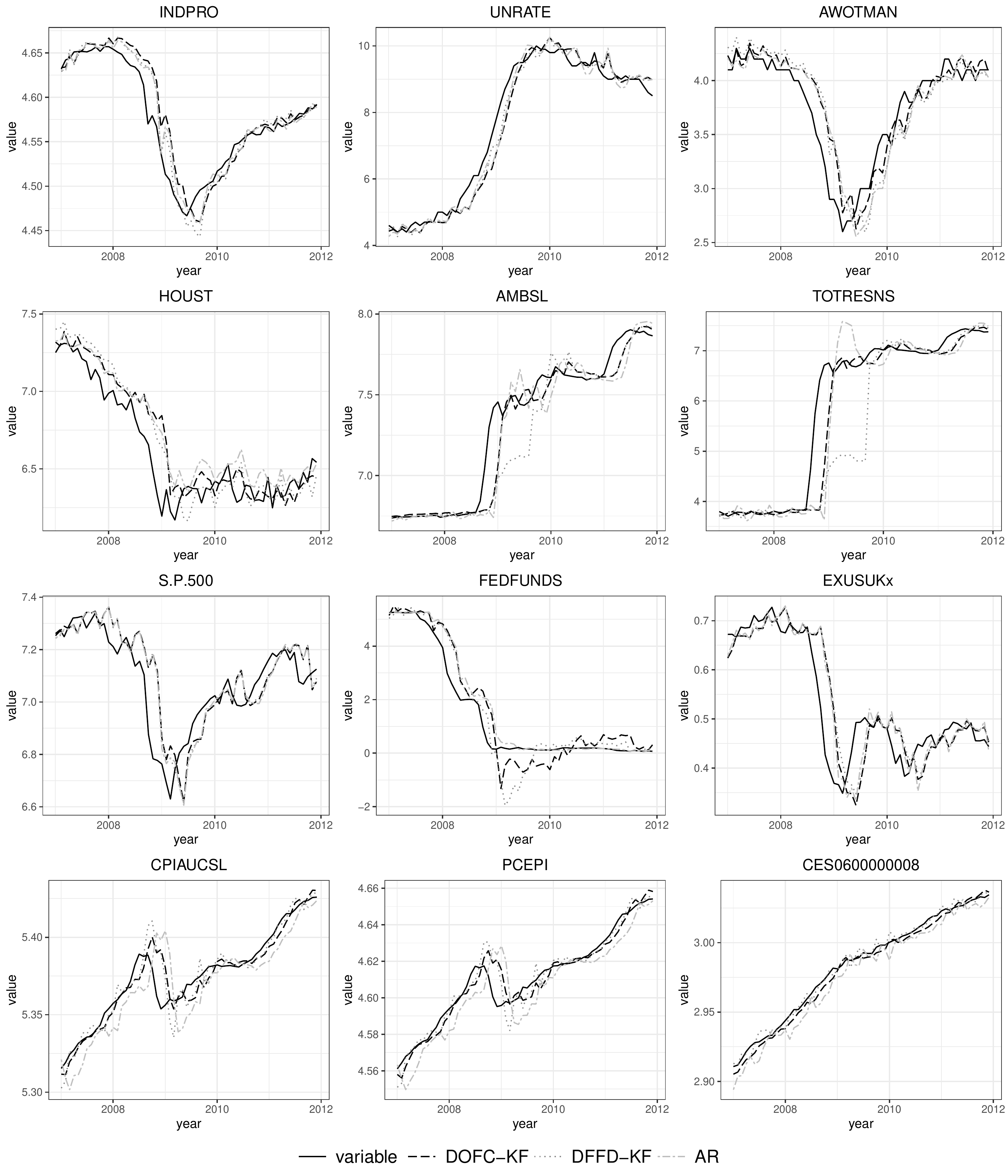}
	\caption[Forecast performance during crysis]{Forecast performance of the DOFC-KF, DFFD-KF and AR model during the world financial crisis 2007 - 2010 for $h=3$. Variable codes are INDPRO: industrial production index; UNRATE: unemployment rate; AWOTMAN: average weekly overtime hours in the manufacturing business; HOUST: housing starts; AMBSL: St. Louis adjusted monetary base; TOTRESNS: total reserves of depository institutions; S.P.500: S\&P500 index; FEDFUNDS: effective federal funds rate; EXUSUKx: US / UK foreign exchange rate; CPIAUCSL: consumer price index; PCEPI: personal consumption index; CES0600000008: average hourly earnings}
	\label{graus:1}
\end{figure}%

\noindent
Finally, we want to draw inference on the performance of the fractional factor models during the world financial crisis. By studying the predictive power of the fractional factor models during this period, we shed light on the behavior of this model class when the economy is hit by a large shock and pushed out of its equilibrium growth path. 
For this purpose, figure \ref{graus:1} sketches the three step ahead predictions for the twelve selected target variables and the two best performing fractional factor models together with the AR benchmark and the realization of the target variable from January 2007 to December 2011. As the graphs show, the forecast performance of the fractional factor models is not systematically affected by the global financial crisis relative to the AR benchmark. Instead, the forecasts converge towards the realizations of the observable variables rapidly after the crisis. The DOFC-KF forecasts seem to be the least affected by the large shock, as they converge faster towards the observable variables. Furthermore, the AR and DFFD-KF predictions for the adjusted monetary base and total reserves of depository institutions seem to be polluted by the crisis until the end of 2009, which substantiates the relative robustness of the DOFC-KF specification. 
\section{Conclusion}\label{Ch:6}
We have derived three different fractional factor models that allow the joint modelling of data of different persistence. A two-stage estimator for the fractional factors and model parameters was derived. In a macroeconomic forecast experiment, it was shown that incorporating fractional integration into the class of factor models improves forecast performance substantially. \\
Future research could examine whether a combination of the DOFC model in state space form and a factor model in fractional differences can improve the predictive power of fractional factor models. Furthermore, one could combine principal components and the Kalman filter analogous to \cite{BraeKoo2014} by reducing the dimension of a subset of observable variables via principal components in order to speed up the estimation of the parameters. Additionally, fractional factor models could be used to explore common trends and cycles in macroeconomic variables and to identify cointegrated blocks. Finally, future research could address the predictive power of fractional factor models for other data sets and economies. If gains are of similar size as for the US, we are confident that fractional factor models have the potential to become a widely used tool for predicting macroeconomic dynamics.

\newpage
\appendix
\section{Consistency of Principal Components for Fractionally Integrated Data} \label{sec:A1}
To proof consistency of principal components in a fractionally integrated setup, we define a minimal fractionally integrated factor model that is given by 
\begin{align}\label{eq:1a}
\vy_t &= \boldsymbol{\Lambda}\vf_t+\ve_{t}, \hspace{0.5cm} &&\forall t=1,...,T,\\
(1-L)^{d}f_{j, t}&=z_{j, t}, && \forall j=1,...,r,
\end{align}
where $\vy_t=(y_{1, t},...,y_{N, t})'$ is a $N$-dimensional vector holding the observable data at point $t$, $\vf_t=(f_{1,t},...,f_{r,t})'\sim I(d)$ contains the unobserved common factors and is $r \times 1$, and $\bm{\Lambda}$ is $N \times r$ and holds the factor loadings. The $N \times 1$ vector $\ve_t=(e_{1,t},...,e_{N,t})'$ and the $r \times 1$ vector $\vz_t=(z_{1,t},...,z_{r,t})'$ are $I(0)$ stochastic processes, $z_{j, t} = \sum_{k=0}^{t-1}c_{j, k}\epsilon_{j, t-k}$ with $\epsilon_{j, t} \sim \mathrm{NID}(0, 1)$ $\forall j=1,...,r$. The model nests the fractional factor models of section \ref{Chapter:2} for $d_1=...=d_r=d$. In matrix form, equation \eqref{eq:1a} is written as
$
\vy = \vf \bm{\Lambda}' + \ve,
$
where $\vy = (\vy_1, ..., \vy_T)'$ is $T \times N$, $\vf = (\vf_1, ..., \vf_T)'$ is $T \times r$ and $\ve = (\ve_1,...,\ve_T)'$ is $T \times N$. In the following, we define $\norm{\mX}= \sqrt{\mathrm{tr}(\mX' \mX)}$. $M < \infty$ is positive and constant.
To extend the proofs of \cite{BaiNg2002}, \cite{Bai2004} to the nonstationary fractional case, we make the following assumptions:
\begin{assumption}[Common stochastic trends]\label{as:cst} 
	The common stochastic trends satisfy the following conditions
	\begin{enumerate}
		\item $\mathrm{E}|z_{j,t}|^{q} \leq M$ for some $q > \mathrm{max}(2, \frac{1}{d - 0.5})$ and for all $t = 1,..., T$, $j=1,...,r$,
		\item The common stochastic trends are mutually independent.
	\end{enumerate}
\end{assumption}
\begin{assumption}[Loadings]\label{as:2}
	The factor loadings $\bm{\Lambda}=(\bm{\lambda_1}', ..., \bm{\lambda_N}')'$ are either deterministic such that $||\bm{\lambda}_i||\leq M \ \forall i=1,...,N$ or stochastic such that $\mathrm{E}||\bm{\lambda}_i||^{4}\leq M \ \forall i=1,...,N$. In either case, $\bm{\Lambda}^{\intercal}\bm{\Lambda}/N \overset{p}{\to} \bm{\Sigma}_N$ as $N \to \infty$, where $\bm{\Sigma}_N$ is a $r \times r$ positive definite deterministic matrix. 
\end{assumption}
\begin{assumption}[Errors]\label{asu:err}
	The errors $\ve_t$  satisfy $\forall i=1,...,N$, $\forall t=1,...,T$
	\begin{enumerate}
		\item $\mathrm{E}[e_{i,t}]=0$, $\mathrm{E}|e_{i,t}|^8 \leq M$,
		\item $\mathrm{E}[\ve_{s}' \ve_{t}/N]=\mathrm{E}[N^{-1}\sum_{i=1}^{N}e_{i,s}e_{i,t}]=\gamma_{N}(s,t)$, $|\gamma_N(s,s)|\leq M \ \forall s=1,...,T$ and \\$T^{-1} \sum_{s=1}^{T} \sum_{t=1}^{T}| \gamma_N(s,t)| \leq M$,
		\item $\mathrm{E}[e_{i,t}e_{j,t}]=\tau_{ij, t}$ with $|\tau_{ij,t}| \leq | \tau_{ij}|$ for some $\tau_{ij}$, and $N^{-1}\sum_{i=1}^{N}\sum_{j=1}^{N}|\tau_{ij}| \leq M$,
		\item $\mathrm{E}[e_{i,t}e_{j,s}]=\tau_{it, js}$ and $N^{-1}T^{-1}\sum_{i=1}^{N}\sum_{j=1}^{N}\sum_{t=1}^{T}\sum_{s=1}^{T}|\tau_{ij, ts}| \leq M$,
		\item For every $(t, s)$, $\mathrm{E}|N^{-1/2}\sum_{i=1}^{N}\{e_{i,s}e_{i,t}-\mathrm{E}[e_{i,s}e_{i,t}]\}|^4 \leq M$.
	\end{enumerate}
\end{assumption}
\begin{assumption}[Independence]\label{asu:ind}
	$\{\bm{\lambda}_i\}$, $\{\vz_t\}$ and $\{\ve_t\}$ are mutually independent stochastic random variables.
\end{assumption}
\noindent
Under these assumptions, corollary \ref{Cor:1} follows directly.
\begin{corollary}\label{Cor:1}
Under assumption \ref{as:cst}, \citet[corollary 2.1]{WuSha2006} yields
\begin{align*}
	T^{-d-0.5} \sum_{t=1}^{\lfloor rT \rfloor} f_{j,t} &\xrightarrow{d} \kappa B_{d}(r), 
\end{align*}
such that $\plim{T \to \infty} T^{-2d} \sum_{t=1}^{T} f_{j,t}^2 \leq M,$
where $\kappa$ is a constant, $r \in [0, 1]$, $B_{d}$ is fractional Brownian motion of type II, $B_{d}(r) = \Gamma(d+1)^{-1}\int_{0}^{r}(r-s)^{d} \mathrm{d}B(s)$ and $B$ is standard Brownian motion generated by $\epsilon_{j,t}$.
\end{corollary}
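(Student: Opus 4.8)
The plan is to recognise the statement as an invariance principle for a type~II fractionally integrated linear process and to reduce it to \citet[Corollary~2.1]{WuSha2006} by checking that its hypotheses are exactly what Assumption~\ref{as:cst} supplies. First I would make the driving representation explicit: with the type~II convention, $(1-L)^d f_{j,t}=z_{j,t}$ gives $f_{j,t}=\Delta_+^{-d}z_{j,t}=\sum_{k=0}^{t-1}\pi_k(-d)\,z_{j,t-k}$, and substituting $z_{j,t}=\sum_{l=0}^{t-1}c_{j,l}\epsilon_{j,t-l}$ shows that $f_{j,t}=\sum_{m=0}^{t-1}\psi_{j,m}\epsilon_{j,t-m}$ is itself a truncated linear process in the i.i.d.\ Gaussian shocks $\epsilon_{j,t}$, with coefficients $\psi_{j,m}=\sum_{k=0}^{m}\pi_k(-d)c_{j,m-k}$ obeying $\psi_{j,m}\sim \omega_z\, m^{d-1}/\Gamma(d)$ as $m\to\infty$, where $\omega_z=\sum_{l\ge 0}c_{j,l}$ is the long-run standard deviation of $\{z_{j,t}\}$. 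The tail rate $m^{d-1}$ is what makes $\{f_{j,t}\}$ an $I(d)$ object and forces the normalisation $T^{-d-1/2}$ for its partial sums, the additional half power of $T$ coming from one further summation.

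Second, I would verify the conditions of \citet{WuSha2006}: (i) the short-memory innovations $\{z_{j,t}\}$ form a linear process in i.i.d.\ shocks with absolutely summable coefficients, so the required weak-dependence and summability conditions hold; and (ii) the moment requirement is precisely $\E{|z_{j,t}|^q}\le M$ for some $q>\mathrm{max}(2,\,1/(d-0.5))$, which is Assumption~\ref{as:cst}(1) verbatim. The threshold $1/(d-0.5)$ is exactly the moment count needed to control the contribution of the slowly decaying weights $\psi_{j,m}$ when $d$ is close to $1/2$, and becomes vacuous once $d$ lies well inside the nonstationary region. With these in hand, \citet[Corollary~2.1]{WuSha2006} delivers $T^{-d-0.5}\sum_{t=1}^{\lfloor rT\rfloor}f_{j,t}\dto\kappa B_d(r)$ on $D[0,1]$, with $\kappa=\omega_z$ and $B_d(r)=\Gamma(d+1)^{-1}\int_0^r(r-s)^d\,\mathrm{d}B(s)$, the factor $\Gamma(d+1)^{-1}$ already absorbing the ``integrated'' normalisation. (Assumption~\ref{as:cst}(2), mutual independence, is not used for this single-index statement; it enters only in the later joint treatment of the $r$ factors.)

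Third, for the bound $\plim{T\to\infty}T^{-2d}\sum_{t=1}^{T}f_{j,t}^2\le M$ I would argue at the level of moments, which is the form needed downstream. Taking expectations, $\E{T^{-2d}\sum_{t=1}^{T}f_{j,t}^2}=T^{-2d}\sum_{t=1}^{T}\Var{f_{j,t}}$, and from $\psi_{j,m}\sim\omega_z m^{d-1}/\Gamma(d)$ together with independence of $\{\epsilon_{j,t}\}$ we get $\Var{f_{j,t}}=\sum_{m=0}^{t-1}\psi_{j,m}^2=O(t^{2d-1})$, whence $T^{-2d}\sum_{t=1}^{T}\Var{f_{j,t}}=O(1)$. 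So $T^{-2d}\sum_{t=1}^{T}f_{j,t}^2$ has bounded expectation and Markov's inequality yields the stochastic boundedness abbreviated by the display; alternatively, the continuous mapping theorem applied to the companion level convergence $T^{1/2-d}f_{j,\lfloor rT\rfloor}\dto\omega_z\widetilde B_d(r)$, with $\widetilde B_d(r)=\Gamma(d)^{-1}\int_0^r(r-s)^{d-1}\mathrm{d}B(s)$, identifies the weak limit as $\omega_z^2\int_0^1\widetilde B_d(r)^2\,\mathrm{d}r$, which is a.s.\ finite.

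The main obstacle is bookkeeping rather than conceptual: one must confirm that the doubly truncated process $f_{j,t}=\sum_{m=0}^{t-1}\psi_{j,m}\epsilon_{j,t-m}$ — whose coefficients are time-truncated rather than the full tails — still fits the framework under which \citet{WuSha2006} prove their invariance principle (the truncation is asymptotically negligible because $\psi_{j,m}$ is regularly varying with index $d-1<0$, so the omitted tail mass is of smaller order than the normalisation), and that the moment condition $q>\mathrm{max}(2,1/(d-0.5))$ is the one their proof needs uniformly over the admissible range of $d$. A secondary, mostly cosmetic, point is that ``$\plim\le M$'' is an abuse of notation — the genuine limit is a non-degenerate random variable — so the corollary should be read, and applied, as an $O_p(1)$ (equivalently, bounded-expectation) statement.
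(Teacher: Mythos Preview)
Your proposal is correct and follows the same route as the paper: the paper gives no separate proof of Corollary~\ref{Cor:1} at all, but simply states it as an immediate consequence of \citet[Corollary~2.1]{WuSha2006} under Assumption~\ref{as:cst}, and then uses it in the proof of Theorem~\ref{th:1} precisely as the $O_p(1)$ statement you identify. Your explicit verification of the Wu--Shao hypotheses, the moment computation $\Var{f_{j,t}}=O(t^{2d-1})$ for the second claim, and the observation that ``$\mathrm{plim}\le M$'' is shorthand for stochastic boundedness, are all more than the paper itself supplies.
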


\noindent
\begin{Theorem}[Consistency of principal components]\label{th:1}
	Suppose assumptions 1 - \ref{asu:ind} hold. Then, for $d \geq 0.5$, the common factors are estimated consistently via principal components up to a rotation such that 
	\begin{align*}
	\delta_{NT}\left(\frac{1}{T}\sum_{t=1}^{T}||\hat{\vf}_t - \mH' \vf_t||^2\right) = O_p(1),
	\end{align*}
	where $\delta_{NT}=\mathrm{min}(N, T^{2d})$, $\hat{\vf}=N^{-1}\vy\vy' \tilde{\vf}T^{-2d}$ is a rescaled version of the principal components together with $\hat{\bm{\Lambda}}=\tilde{\bm{\Lambda}}(\tilde{\bm{\Lambda}}' \tilde{\bm{\Lambda}}N^{-1})^{-1}$, $\tilde{\vf}$ is ${T^{-d}}$ times the eigenvector of $\vy \vy'$, $\tilde{\bm{\Lambda}}'=T^{-2d}\tilde{\vf}' \vy$, and $\bm{H}=(\bm{\Lambda}' \bm{\Lambda}N^{-1})(\vf'\tilde{\vf})T^{-2d}$.
\end{Theorem}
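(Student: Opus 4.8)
\noindent
The plan is to transpose the argument of \citet{BaiNg2002} and \citet{Bai2004} to the fractional scaling, using Corollary \ref{Cor:1} in place of the classical invariance principle and replacing the $I(1)$ normalisation $T^{2}$ by $T^{2d}$ throughout. The starting point is the algebraic identity behind the estimator. Since $\tilde{\vf}$ collects (a rescaling of) the eigenvectors attached to the $r$ largest eigenvalues of $\vy\vy'$, there is a diagonal matrix $\mV_{NT}$ of the corresponding rescaled eigenvalues with $\hat{\vf}=\tilde{\vf}\mV_{NT}$, hence $\tilde{\vf}_t=\mV_{NT}^{-1}\hat{\vf}_t$. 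Substituting $\vy_t=\bm{\Lambda}\vf_t+\ve_t$ into $\hat{\vf}_t=(NT^{2d})^{-1}\sum_{s=1}^{T}\tilde{\vf}_s(\vy_s'\vy_t)$ and using that $\mH=(\bm{\Lambda}'\bm{\Lambda}/N)(\vf'\tilde{\vf})T^{-2d}$ is constructed precisely so that the signal part $(NT^{2d})^{-1}\sum_s\tilde{\vf}_s\vf_s'\bm{\Lambda}'\bm{\Lambda}\vf_t$ equals $\mH'\vf_t$, one obtains
\begin{align*}
\hat{\vf}_t-\mH'\vf_t={}&\frac{1}{NT^{2d}}(\tilde{\vf}'\vf)\Big(\frac{\bm{\Lambda}'\ve_t}{N}\Big)+\frac{1}{NT^{2d}}(\tilde{\vf}'\ve)\bm{\Lambda}\vf_t\\
&+\frac{1}{NT^{2d}}\sum_{s=1}^{T}\tilde{\vf}_s\,\ve_s'\ve_t=:\vr_{1,t}+\vr_{2,t}+\vr_{3,t},
\end{align*}
so it suffices to bound $T^{-1}\sum_t\norm{\vr_{j,t}}^2$ for $j=1,2,3$ and to control $\mV_{NT}$.

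\noindent
Before that I would assemble the standing estimates. The normalisation of $\tilde{\vf}$ makes $T^{-2d}\tilde{\vf}'\tilde{\vf}$ a fixed multiple of $\mI_r$; Corollary \ref{Cor:1} gives $T^{-2d}\sum_t\norm{\vf_t}^2=O_p(1)$ -- this is where Assumption \ref{as:cst} and the invariance principle enter, and where the moment order $q>\mathrm{max}(2,(d-1/2)^{-1})$ is consumed -- and, together with the standard bound $\E{f_{j,t}^{2}}=O(t^{2d-1})$ on the variance of a type II $I(d)$ process (up to a logarithm at $d=1/2$), it controls the mixed moments that appear in the remainder bounds; Assumption \ref{as:2} gives $\bm{\Lambda}'\bm{\Lambda}/N\pto\bm{\Sigma}_N\succ0$; and Assumption \ref{asu:err} supplies the usual weak-dependence and moment controls on $\ve$. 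Combining the first two yields $T^{-2d}\norm{\tilde{\vf}'\vf}=O_p(1)$ and hence $\mH=O_p(1)$. A separate step, which I would carry out first, shows $\mV_{NT}=O_p(1)$ and $\mV_{NT}^{-1}=O_p(1)$: under the $NT^{2d}$ scaling the signal matrix $T^{-2d}\vf(\bm{\Lambda}'\bm{\Lambda}/N)\vf'$ has exactly $r$ eigenvalues that are $O_p(1)$ and bounded away from zero (because $T^{-2d}\vf'\vf$ converges to a positive definite random matrix by the invariance principle of Corollary \ref{Cor:1}, and $\bm{\Lambda}'\bm{\Lambda}/N\to\bm{\Sigma}_N$), while the three idiosyncratic cross-terms in $\frac{1}{NT^{2d}}\vy\vy'$ have operator norm $o_p(1)$ for every $d\ge1/2$; a perturbation bound (Weyl's inequality) then separates the $r$ dominant eigenvalues. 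This makes $\hat{\vf}$ and $\tilde{\vf}$ interchangeable and $\mH$ asymptotically invertible.

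\noindent
The three remainders are then treated in the Bai--Ng fashion. The first is immediate: $T^{-1}\sum_t\norm{\vr_{1,t}}^2\le(T^{-2d}\norm{\tilde{\vf}'\vf})^2\,T^{-1}\sum_t\norm{N^{-1}\bm{\Lambda}'\ve_t}^2=O_p(N^{-1})$. For $\vr_{2,t}$ and $\vr_{3,t}$ I would substitute $\tilde{\vf}_s=\mV_{NT}^{-1}\big(\mH'\vf_s+(\hat{\vf}_s-\mH'\vf_s)\big)$, i.e.\ replace $\tilde{\vf}'\ve$ by $\mV_{NT}^{-1}\big(\mH'\vf'\ve+(\hat{\vf}-\vf\mH)'\ve\big)$ and proceed likewise inside the $\vr_{3,t}$ sum. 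The ``$\mH'\vf_s$'' pieces produce genuinely mixed objects -- essentially $\vf'\ve$ and $\sum_s\vf_s\,\ve_s'\ve_t$ -- which are $O_p(NT^{2d})$-type quantities, controlled directly via Assumptions \ref{asu:err} and \ref{asu:ind} and the variance bound above; they contribute terms of order $O_p((NT)^{-1})$, $O_p(T^{-2d})$ and $O_p((NT^{2d})^{-1})$, and here one must bound matrix products such as $(\vf'\ve)\bm{\Lambda}$ directly rather than through $\norm{\vf'\ve}\,\norm{\bm{\Lambda}}_{\mathrm{op}}$, otherwise the rate collapses. The ``$\hat{\vf}_s-\mH'\vf_s$'' pieces feed $T^{-1}\sum_s\norm{\hat{\vf}_s-\mH'\vf_s}^2$ -- the quantity being bounded -- back into the estimate, multiplied by a factor that is $o_p(1)$ for every $d\ge1/2$, so it is absorbed on the left-hand side. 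Collecting terms gives $T^{-1}\sum_t\norm{\hat{\vf}_t-\mH'\vf_t}^2=O_p(N^{-1})+O_p(T^{-2d})=O_p(\delta_{NT}^{-1})$, which is the claim.

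\noindent
I expect the main obstacle to be this last step carried out in full: reproducing, in the $I(d)$ scaling, the Bai--Ng mechanism of absorbing the self-referential remainder, and in particular checking that \emph{every} intermediate bound yields the rate $\mathrm{min}(N,T^{2d})$ rather than something weaker -- this is where the matrix inequalities must be chosen with care, and where the boundary $d\downarrow1/2$ is delicate, since there the trivial trace bound on the idiosyncratic eigenvalues (which gives only $O_p(NT)=O_p(NT^{2d})$) no longer suffices. A genuinely fractional secondary point is the first part of Assumption \ref{as:cst}: the exponent $(d-1/2)^{-1}$ is exactly the moment order that the invariance principle behind Corollary \ref{Cor:1} requires, and one must check that the maximal inequalities used to pass from the partial sums to fractional Brownian motion -- and hence to $T^{-2d}\sum_t f_{j,t}^2=O_p(1)$ -- remain valid uniformly down to $d=1/2$.
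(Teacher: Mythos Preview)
Your proposal is correct but takes a more elaborate route than the paper. You use a three-term decomposition and then, for $\vr_{2,t}$ and $\vr_{3,t}$, substitute $\tilde{\vf}_s=\mV_{NT}^{-1}\big(\mH'\vf_s+(\hat{\vf}_s-\mH'\vf_s)\big)$, which forces you to first establish $\mV_{NT}^{-1}=O_p(1)$ via an eigenvalue separation argument and then to absorb a self-referential remainder on the left-hand side. This is the \citet{Bai2004}-style machinery and it works, but it is heavier than what the consistency statement requires. The paper instead follows the original \citet{BaiNg2002} template: it splits the idiosyncratic term $N^{-1}\ve_s'\ve_t$ into its mean $\gamma_N(s,t)$ and the centered piece $\zeta_{s,t}$, obtaining a four-term decomposition, and bounds each of the four sums \emph{directly} by Cauchy--Schwarz, using nothing about $\tilde{\vf}$ beyond the built-in normalisation $T^{-2d}\sum_s\|\tilde{\vf}_s\|^2=O_p(1)$. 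No eigenvalue matrix $\mV_{NT}$, no Weyl perturbation bound, no substitution, no absorption. The four rates are $O_p(T^{-2d})$, $O_p(T^{1-2d}N^{-1})$, $O_p(N^{-1})$, $O_p(N^{-1})$, and the result follows.

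The practical difference: your route gives you the ingredients for distribution theory (invertibility of $\mH$, control of $\mV_{NT}$), but the paper's route is shorter for the stated $O_p(\delta_{NT}^{-1})$ bound, and the ``delicate'' boundary issue at $d=1/2$ that you flag for the eigenvalue separation simply does not arise there, because the argument never needs the idiosyncratic eigenvalues to be strictly dominated.
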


\begin{proof}[Proof of Theorem \ref{th:1}]
	First of all, note that $\tilde{\vf}\tilde{\bm{\Lambda}'}=\hat{\vf}\hat{\bm{\Lambda}}'$ since
	\begin{align*}
	\hat{\vf}\hat{\bm{\Lambda}}'=T^{-2d}\vy\vy' \tilde{\vf} (\tilde{\bm{\Lambda}}'\tilde{\bm{\Lambda}})^{-1}\tilde{\bm{\Lambda}}'=T^{-2d}\tilde{\vf}\tilde{\bm{\Lambda}}' \tilde{\bm{\Lambda}} \tilde{\vf}' \tilde{\vf}(\tilde{\bm{\Lambda}}'\tilde{\bm{\Lambda}})^{-1}\tilde{\bm{\Lambda}}'=\tilde{\vf}\tilde{\bm{\Lambda}}'.
	\end{align*}
	Next 
	\begin{align*}
	\norm{\mH}&\leq \norm{\bm{\Lambda}'\bm{\Lambda}N^{-1}} \norm{\vf'\vf T^{-2d}}^{1/2}\norm{\tilde{\vf}' \tilde{\vf} T^{-2d}}^{1/2},
	\end{align*}
	where the first term is $O_p(1)$ by assumption \ref{as:2}, the second term is $O_p(1)$ by corollary \ref{Cor:1} and the last term is $O_p(1)$ by construction. 
	Now
	\begin{align*}
	\hat{\vf}_t-\mH'\vf_t&=N^{-1}T^{-2d}\left(\sum_{s=1}^{T}\tilde{\vf}_s\vy_s' \vy_t - \tilde{\vf}'\vf \bm{\Lambda}' \bm{\Lambda}{\vf}_t\right)= \\
	&=N^{-1}T^{-2d}\left(\sum_{s=1}^{T}\tilde{\vf}_s\vf_s' \bm{\Lambda}' \ve_t+\sum_{s=1}^{T}\tilde{\vf}_s \ve_s' \bm{\Lambda}\vf_t+\sum_{s=1}^{T}\tilde{\vf_s}\ve_s' \ve_t \right)= \\
	&=T^{-2d}\left( \sum_{s=1}^{T} \tilde{\vf}_s \gamma_N(s,t) + \sum_{s=1}^{T} \tilde{\vf}_s\zeta_{st} + \sum_{s=1}^{T} \tilde{\vf}_s \eta_{st} + \sum_{s=1}^{T} \tilde{\vf}_s\xi_{st} \right),
	\end{align*}
	where, to be consistent with the proofs of \cite{Bai2004}, we define
	\begin{align*}
	&\gamma_{N}(s,t) = N^{-1}\sum_{i=1}^{N}\mathrm{E}(e_{i,s}e_{i,t}), &&\zeta_{s,t} = N^{-1}\sum_{i=1}^{N}(e_{i,s}e_{i,t}-\mathrm{E}(e_{i,s}e_{i,t})), \\
	&\eta_{s,t}=N^{-1}\vf_s'\bm{\Lambda}' \ve_t, &&\xi_{s,t}=N^{-1}\vf_t'\bm{\Lambda}'\ve_s.
	\end{align*}
	Note that
	\begin{align}\label{pr1:1}
	\frac{1}{T}\sum_{t=1}^{T}\norm{\hat{\vf}_t-\mH'\vf_t}^2 &\leq \frac{4}{T}\sum_{t=1}^{T}\Bigg[\norm{T^{-2d}\sum_{s=1}^{T} \tilde{\vf}_s \gamma_N(s,t)}^2 + \norm{T^{-2d}\sum_{s=1}^{T} \tilde{\vf}_s\zeta_{s,t}}^2 \notag \\
	&+ \norm{T^{-2d}\sum_{s=1}^{T} \tilde{\vf}_s \eta_{s,t}}^2 + \norm{T^{-2d} \sum_{s=1}^{T} \tilde{\vf}_s\xi_{s,t}}^2\Bigg].
	\end{align}
	The first argument of equation \eqref{pr1:1} is
	\begin{align*}
	&\frac{1}{T}\sum_{t=1}^{T}\norm{\sum_{s=1}^{T} T^{-2d}\tilde{\vf}_s \gamma_N(s,t)}^2
	\leq T^{-1}\sum_{t=1}^{T}\left[ \sum_{s=1}^{T} \norm{T^{-2d}\tilde{\vf}_s}^2\right] \left[ \sum_{s=1}^{T}\gamma_{N}(s,t)^2 \right],
	\end{align*}
	where from corollary \ref{Cor:1},
	$
	\sum_{s=1}^{T} \norm{T^{-2d}\tilde{\vf}_s}^2=\sum_{s=1}^{T} \mathrm{tr}\left(T^{-4d}\tilde{\vf}_s' \tilde{\vf}_s \right)=
	O_p(T^{-2d}), \label{part1:1}
	$
	and by assumption \ref{asu:err}
	\begin{align}
	&T^{-1}\sum_{s=1}^{T}\gamma_{N}(s,t)^2=T^{-1}\sum_{s=1}^{T}\sum_{t=1}^{T}\gamma_{N}(s,s)\gamma_N(t,t)\rho(s,t)^2\leq \notag \\
	&\leq M T^{-1}\sum_{s=1}^{T}\sum_{t=1}^{T}|\gamma_{N}(s,s)\gamma_N(t,t)|^{1/2}|\rho(s,t)|=M T^{-1}\sum_{s=1}^{T}\sum_{t=1}^{T}|\gamma_N(s,t)|\leq M^2. \label{part1:2}
	\end{align}
	From \eqref{part1:2} it follows that
	$
	T^{-1}\sum_{t=1}^{T}\norm{\sum_{s=1}^{T} T^{-2d}\tilde{\vf}_s \gamma_N(s,t)}^2=O_p(T^{-2d}).
	$
	
	\noindent
	For the second term in \eqref{pr1:1} one has
	\begin{align}
	&\sum_{t=1}^{T}\norm{T^{-2d}\sum_{s=1}^{T} \tilde{\vf}_s\zeta_{s,t}}^2=\sum_{t=1}^{T}\sum_{s=1}^{T}\sum_{u=1}^{T}\left(T^{-4d}\tilde{\vf}_{s}\tilde{\vf}_{u}\right)\zeta_{s,t}\zeta_{u,t}\leq \notag \\
	&\leq \left( \sum_{s=1}^{T}\sum_{u=1}^{T}T^{-4d}(\tilde{\vf}_{s}\tilde{\vf}_{u})^2 \right)^{1/2}\left( T^{-4d} \sum_{s=1}^{T} \sum_{u=1}^{T} \left( \sum_{t=1}^{T} \zeta_{s,t} \zeta_{u,t} \right)^2 \right)^{1/2}\leq \notag \\
	&\leq \left(T^{-2d} \sum_{s=1}^{T} \norm{\tilde{\vf}_s}^2 \right)\left( T^{-4d} \sum_{s=1}^{T} \sum_{u=1}^{T} \left( \sum_{t=1}^{T} \zeta_{s,t} \zeta_{u,t} \right)^2 \right)^{1/2}, \label{p2:1}
	\end{align}
	where $T^{-2d} \sum_{s=1}^{T} \norm{\tilde{\vf}_s}^2$ is $O_p(1)$ from corollary \ref{Cor:1} and
	$\mathrm{E}[(\sum_{t=1}^{T} \zeta_{s,t} \zeta_{u,t})^2]\leq T^2\underset{s, t}{\mathrm{max} \ }\mathrm{E}|\zeta_{s,t}|^4$.
	For the latter term one has by assumption \ref{asu:err}
	\begin{align*}
	\mathrm{E}|\zeta_{s,t}|^4=\frac{1}{N^2}\mathrm{E}\left| N^{-1/2}\sum_{i=1}^{N}\left[e_{i,s}e_{i,t}-\mathrm{E}(e_{i,s}e_{i,t})\right]\right|^4 = N^{-2}M,
	\end{align*}
	as in \cite{BaiNg2002} and, therefore, $\mathrm{E}[(\sum_{t=1}^{T} \zeta_{s,t} \zeta_{u,t})^2]$ is $O_p(\frac{T^2}{N^2})$. Together with \eqref{p2:1} this implies
	\begin{align*}
	&\frac{1}{T}\sum_{t=1}^{T}\norm{T^{-2d}\sum_{s=1}^{T} \tilde{\vf}_s\zeta_{st}}^2\leq T^{-1}O_p(1)O_p(T^{2(1-d)}N^{-1})=O_p(T^{1-2d}N^{-1}),
	\end{align*}
	where $1-2d \leq 0$.
	Considering the third term of equation \eqref{pr1:1} we have
	\begin{align*}
	&\frac{1}{T}\sum_{t=1}^{T}\norm{T^{-2d}\sum_{s=1}^{T} \tilde{\vf}_s \eta_{s,t}}^2 =T^{-1}\sum_{t=1}^{T}\norm{\sum_{s=1}^{T} T^{-2d}\tilde{\vf}_s\vf_s' \bm{\Lambda}' \frac{\ve_t}{N}}^2\leq \\ 
	&\leq T^{-1}\sum_{t=1}^{T}N^{-2}\norm{\ve_t' \bm{\Lambda}}^2 \left(\sum_{s=1}^{T}\norm{T^{-2d}\tilde{\vf}_s\vf_s'}^2\right)\leq \\
	&\leq T^{-1}\sum_{t=1}^{T}N^{-2}\norm{\ve_t' \bm{\Lambda}}^2\left(\sum_{s=1}^{T}\norm{T^{-2d}\vf_s}^2\right)\left(\sum_{s=1}^{T}\norm{\tilde{\vf}_s}^2\right)= \\
	&= T^{-1}\sum_{t=1}^{T}N^{-2}\norm{\ve_t' \bm{\Lambda}}^2\left(\sum_{s=1}^{T}\sum_{l=1}^{r}T^{-2d}f_{l,s}^2\right)\left(T^{-2d}\sum_{s=1}^{T}\norm{\tilde{\vf}_s}^2\right)= \\
	&=  T^{-1}\sum_{t=1}^{T}N^{-2}\norm{\ve_t' \bm{\Lambda}}^2 O_p(1),
	\end{align*}
where the last step follows from corollary \ref{Cor:1}.
	Note that
	\begin{align*}
	\mathrm{E}\left[ \norm{\frac{1}{\sqrt{N}}\sum_{i=1}^{N}e_{i,t}\bm{\lambda}_i}^2 \right]=\frac{1}{N}\sum_{i=1}^{N}\sum_{j=1}^{N}\mathrm{E}(e_{i,t}e_{j,t}\bm{\lambda}_i'\bm{\lambda}_j )\leq \bar{\lambda}^2\frac{1}{N}\sum_{i=1}^{N}\sum_{j=1}^{N}|\tau_{ij}|=\bar{\lambda}^2M,
	\end{align*} with $\bar{\lambda}^2 < \infty$ from assumption \ref{as:2}, and therefore one has for the third term
	\begin{align*}
	&\frac{1}{T}\sum_{t=1}^{T}\norm{T^{-2d}\sum_{s=1}^{T} \tilde{\vf}_s \eta_{s,t}}^2 =T^{-1}\sum_{t=1}^{T}O_p(N^{-1})=O_p(N^{-1}).
	\end{align*}
	Finally, for the last term, one has 
	\begin{align*}
	&\frac{1}{T} \sum_{t=1}^{T}\norm{T^{-2d} \sum_{s=1}^{T} \tilde{\vf}_s\xi_{s,t}}^2=\frac{1}{T}\sum_{t=1}^{T}\norm{\sum_{s=1}^{T}T^{-2d}\tilde{\vf}_s\vf_t' \bm{\Lambda}' \frac{\ve_s}{N}}^2 \leq \\
	&\leq \frac{1}{T} \left( \sum_{s=1}^{T} \norm{N^{-1}\ve_s'\bm{\Lambda}}^2\right)\sum_{t=1}^{T}\sum_{s=1}^{T}\norm{T^{-2d}\tilde{\vf}_s \tilde{\vf}_t'}^2 \leq \\
	&\leq \frac{1}{T} \left( \sum_{s=1}^{T} \norm{N^{-1}\ve_s'\bm{\Lambda}}^2\right) \left(\sum_{t=1}^{T}\norm{T^{-2d}\vf_t}^2\right)\left(\sum_{s=1}^{T}\norm{\tilde{\vf}_s}^2\right)= \\
	&= \frac{1}{T} \left( \sum_{s=1}^{T} \norm{N^{-1}\ve_s'\bm{\Lambda}}^2\right) \left(\sum_{t=1}^{T}\sum_{l=1}^{r}T^{-2d}f_{l,t}^2\right) \left(T^{-2d}\sum_{s=1}^{T}\norm{\tilde{\vf}_s}^2\right)= O_p(N^{-1}).
	\end{align*}
by corollary \ref{Cor:1}. Consequently, theorem \ref{th:1} holds.
\end{proof}

\newpage
\begin{spacing}{1.2}
	\bibliographystyle{dcu}
	\bibliography{literatur.bib}
\end{spacing}
\newpage

\end{document}